\algrenewcommand\algorithmicrequire{\textbf{Precondition:}}
\algrenewcommand\algorithmicensure{\textbf{Postcondition:}}
\newtheorem{theorem}{Theorem}[section]
\newtheorem{lemma}[theorem]{Lemma}
\newtheorem*{lemma*}{Lemma}
\newtheorem{corollary}[theorem]{Corollary}
\theoremstyle{definition}
\newtheorem{definition}{Definition}[section]
\theoremstyle{definition}
\renewenvironment{proof}[1][Proof]{\begin{trivlist}
\item[\hskip \labelsep {\bfseries #1}]}{\end{trivlist}}
\newenvironment{example}[1][Example]{\begin{trivlist}
\item[\hskip \labelsep {\bfseries #1}]}{\end{trivlist}}
\newcommand{\expect}[1]{\text{\bf E}\!\left[#1\right]}
\newcommand{\jobs}{\mathcal{J}}
\newcommand{\alg}{\texttt{ALG}}
\newcommand{\Alg}{\texttt{ONL}}
\newcommand{\W}[2]{W_{#1}(#2)}
\newcommand{\Algi}[1]{\alg_{#1}}
\newcommand{\Algit}[2]{\W{#2}{\Algi{#1}}}
\newcommand{\opt}{\texttt{OPT}}
\newcommand{\reg}{\textsc{Reg}}
\newcommand{\switcher}{\texttt{Expert-ALG}}
\newcommand{\switcherbandit}{\texttt{Bandit-ALG}}
\newcommand{\rit}[2]{r_{#2}^{(#1)}}
\newcommand{\Rit}[2]{R_{#2}^{(#1)}}
\newcommand{\integers}{\mathbb{Z}}
\newcommand{\orderresp}{order respecting\xspace}
\newcommand{\lmax}{\ell_{\textrm{max}}}
\newcommand{\dmax}{d_{\textrm{max}}}
\newcommand{\dlmax}{\dmax + \lmax}
\newcommand{\vmax}{v_{\textrm{max}}}
\newcommand{\rmax}{R}
\newcommand{\fts}{\ensuremath{\texttt{FTS}}}
\newcommand{\ftbs}{\ensuremath{\texttt{FTBS}}}
\newcommand{\coin}[1]{\kappa(#1)}
\newcommand{\heads}{\texttt{Heads}}
\newcommand{\tails}{\texttt{Tails}}
\newsavebox\myboxA
\newsavebox\myboxB
\newlength\mylenA
\newcommand*\xoverline[2][0.75]{%
    \sbox{\myboxA}{$\m@th#2$}%
    \setbox\myboxB\null
    \ht\myboxB=\ht\myboxA%
    \dp\myboxB=\dp\myboxA%
    \wd\myboxB=#1\wd\myboxA
    \sbox\myboxB{$\m@th\overline{\copy\myboxB}$}
    \setlength\mylenA{\the\wd\myboxA}
    \addtolength\mylenA{-\the\wd\myboxB}%
    \ifdim\wd\myboxB<\wd\myboxA%
       \rlap{\hskip 0.5\mylenA\usebox\myboxB}{\usebox\myboxA}%
    \else
        \hskip -0.5\mylenA\rlap{\usebox\myboxA}{\hskip 0.5\mylenA\usebox\myboxB}%
    \fi}
\newenvironment{numberedtheorem}[1]{%
\begin{theorem}}{\end{theorem}\addtocounter{theorem}{-1}}
\newenvironment{numberedlemma}[1]{%
\begin{lemma}}{\end{lemma}\addtocounter{theorem}{-1}}
\begin{document}
\title{Truth and Regret in Online Scheduling}

\author[*]{Shuchi Chawla}
\affil[*]{University of Wisconsin-Madison~(\texttt{shuchi@cs.wisc.edu})}

\author[**]{Nikhil Devanur}
\affil[**]{Microsoft Research~(\texttt{nikdev@microsoft.com})}
\author[$\dag$]{Janardhan Kulkarni}
\affil[$\dag$]{Microsoft Research~(\texttt{jakul@microsoft.com})}

\author[$\ddag$]{Rad Niazadeh}
\affil[$\ddag$]{Cornell University~(\texttt{rad@cs.cornell.edu})}

\date{}

\maketitle
\begin{abstract}
We consider a scheduling problem where a cloud service provider has
multiple units of a resource available over time. Selfish clients
submit jobs, each with an arrival time, deadline, length, and
value. The service provider's goal is to implement a truthful online
mechanism for scheduling jobs so as to maximize the social welfare of
the schedule. Recent work shows that under a stochastic assumption on
job arrivals, there is a single-parameter family of mechanisms that
achieves near-optimal social welfare. We show that given any such
family of near-optimal online mechanisms, there exists an online
mechanism that in the worst case performs nearly as well as the best
of the given mechanisms. Our mechanism is truthful whenever the
mechanisms in the given family are truthful and prompt, and achieves
optimal (within constant factors) regret.

We model the problem of competing against a family of online
scheduling mechanisms as one of learning from expert advice. A primary
challenge is that any scheduling decisions we make affect not only the
payoff at the current step, but also the resource availability and
payoffs in future steps. Furthermore, switching from one algorithm
(a.k.a. expert) to another in an online fashion is challenging both
because it requires synchronization with the state of the latter
algorithm as well as because it affects the incentive structure of the
algorithms. 

We further show how to adapt our algorithm to a non-clairvoyant
setting where job lengths are unknown until jobs are run to
completion. Once again, in this setting, we obtain truthfulness along
with asymptotically optimal regret (within polylogarithmic factors).

\end{abstract}
\pagebreak

\section{Introduction}


We consider an online mechanism design problem inspired by the
allocation and scheduling of cloud services. A scheduler allocates
scarce resources to jobs arriving over time with the goal of
maximizing economic efficiency or social welfare. The jobs are
submitted by selfish users who can lie about the job's value, length,
arrival, or deadline, so as to obtain a better allocation or pay a
cheaper price. Our goal is to design an online mechanism that is
truthful and obtains good welfare guarantees in the worst case.



There is a vast and rich literature on online mechanism design in
settings where selfish agents participate in the mechanism over
time. What makes the scheduling problem described above interesting is
that even ignoring the users' incentive constraints, there are strong
lower bounds in the worst case for the purely algorithmic problem of
scheduling jobs with deadlines to maximize
welfare. \citet{canetti1998bounding} showed, in particular, that no
online algorithm can achieve a less than polylogarithmic competitive
ratio for this problem in comparison to the hindsight optimal
schedule. On the other hand, \citet{lavi2015online} showed that no
deterministic mechanism that is truthful with respect to all of the
parameters can approximate social welfare better than a factor $T$ in
the worst case, where $T$ is the time horizon, even for unit length
jobs on a single machine. In the face of these strong negative
results, a number of works have considered weakening various aspects
of the model in order to obtain positive results, such as requiring a
slackness condition on the jobs' deadlines~\cite{azar2015truthful},
allowing the algorithm to make tardy
decisions~\cite{hajiaghayi2005online}, satisfying incentive
compatibility with respect to only a few of the jobs'
parameters~\cite{cole2008prompt, azar2011prompt}, etc.

\paragraph{\bf A Bayesian benchmark.}
In this paper we follow an alternate approach of competing against a
benchmark inspired by a stochastic model for job arrivals. For many
online problems, the hindsight optimum is too pessimistic and strong a
benchmark to compete against. A classic example from algorithmic
mechanism design is the digital goods auction for which no mechanism
can compete against the hindsight optimum that obtains the entire
social welfare. But picking the right benchmark to compete against,
namely the optimal posted price, has led to the design of many
beautiful mechanisms with robust and strong revenue guarantees.
\citet{HR-08} advocate a general framework for generating an
appropriate benchmark for such online problems---determine the class
of all mechanisms that are optimal for the problem in an appropriate
stochastic setting; compete against the best of these Bayesian optimal
mechanisms for a worst case instance. 

{\em In this paper we apply the \citeauthor{HR-08} approach to online
  scheduling.  We show that for any given finite class of truthful
  scheduling mechanisms, we can design an online mechanism that is
  competitive against the best of the given mechanisms. Our mechanism
  is truthful with respect to all of the parameters of the jobs'
  types, is computationally efficient, and requires no assumptions on
  the input instance. We achieve asymptotically optimal regret
  guarantees with respect to all of the involved problem parameters.
}

While the \citeauthor{HR-08} agenda has been successfully applied in
mechanism design settings, it has not yet seen much use in algorithmic
settings where the worst case optimization problem is hard but
positive results are known under stochastic assumptions.\footnote{The
  idea of combining several online algorithms into one that is nearly
  as good as the best of them has been explored previously, but in the
  limited context of metrical task systems and adaptive data
  structures. See, e.g., \citet{BB00, BCK02}.}  Our hope is that this
work will spur future work in this direction.

Our work is inspired by the recent work of \citet{CDH+17} that shows
that when jobs are drawn from an i.i.d. distribution in every time
period, a family of simple mechanisms achieves near-optimal social
welfare. \citeauthor{CDH+17}'s mechanism is a simple greedy best-effort
mechanism based on posted prices. The mechanism announces a price
per-unit of resource for each time period into the future. When a job
arrives it gets scheduled in a best-effort FIFO manner in the cheapest
slots that satisfy its requirements. \citeauthor{CDH+17} show that if the
number of resources per time period is large enough, then for any
underlying distribution over job types, there exists a set of prices
such that this posted-pricing-FIFO mechanism achieves a $1-\epsilon$
approximation to expected social welfare. Unfortunately, finding the
best price to offer requires knowing the fine details of the
distribution over job types and solving a large linear program. 

\citeauthor{CDH+17}'s mechanisms are parameterized by a single price.
Machine learning techniques have been succesfully used to tune
parameters of heuristics for a wide variety of problems \cite{xu2008satzilla,hutter2009paramils}.  This
is typically done in a batch setting, where past data is used to find
a good setting of parameters for the heuristic.  For an inherently
online problem such as scheduling, we seek to do the parameter tuning
itself in an online manner.  Can we search for the right parameters as
we are running the heuristic?

We model the online mechanism design problem as a sort of ``learning
from expert advice'' problem where the experts are algorithms in the
set of all Bayesian optimal algorithms. We present a general black-box
reduction from the online scheduling problem to online learning
algorithms for the experts setting. Previous work along these lines
has looked at online settings where the mechanism gets a new instance
of the problem in every time step, and the learning algorithm can
adapt the parameters of the mechanism as more and more instances are
seen. In the scheduling setting the algorithms run on a single
instance of the problem, and we want to achieve low regret over all
Bayesian optimal mechanisms for this specific instance.

\paragraph{\bf Challenges.}
There are several novel challenges that the scheduling context
imposes. First, in our setting, jobs can grab resources for multiple
consecutive units of time. So the decision of scheduling a job in the
current time step can affect the availability of resources in many
future steps. Furthermore it is unclear at what time an algorithm
should receive credit for scheduling a job -- when the job starts, or
when it ends, or throughout its execution? What if the algorithm uses
preemption?

Second, while an online algorithm for the experts problem can cleanly
switch from one expert to another during execution, in the scheduling
setting switching can be tricky. On the one hand, we cannot abandon
jobs that have already been started but not finished by the previous
algorithm. On the other hand, we must try to match the state of the
new algorithm that we are following, so as to obtain the same
reward. 
Most importantly, switching impacts the \emph{incentive properties} of the
underlying mechanisms. Indeed there are several ways in which jobs may
lie to affect their outcomes in the combined mechanism, even if they
cannot in the underlying algorithms. For example, jobs may be able to
influence the time at which a switch happens, or the next algorithm
that is selected. Moreover, a job may be able to benefit indirectly by
influencing the schedule of the algorithm that is used just prior to
the algorithm that determines the job's allocation.

\paragraph{\bf Our techniques.}
We design a switching procedure that overcomes all of these
challenges. Each switch takes a few time steps to finish previously
scheduled jobs, synchronize with the state of the new mechanism, as
well as handle the scheduling of intermediate jobs in a manner so as
to preserve their incentive structure. As a consequence of this
``switching delay'' our algorithm incurs a bounded cost for every
switch. Using this switching procedure in tandem with a reduction to
experts with switching costs allow us to obtain a sublinear regret
guarantee in addition to truthfulness.

Our truthful switching technique is oblivious to the details of the
underlying truthful mechanisms and works as long as the underlying
mechanisms are prompt, that is, they announce the allocation and
payment of a job right at the time of the job's arrival. To our
knowledge this is the first result that combines truthful mechanisms
online into an overall truthful mechanism.



\paragraph{\bf Non-clairvoyance.}
Finally, one challenge faced by schedulers in a real-world setting is
that jobs may not know how long they would take to run. Fortunately,
the benchmark suggested by \citet{CDH+17}, namely the optimal
posted-pricing-FIFO mechanism, continues to obtain near-optimal
welfare---the mechanism does not need to know jobs' lengths in order
to make scheduling decisions, although the existence of a good posted
price assumes that lengths along with other job parameters are drawn
from some i.i.d. distribution. 

Non-clairvoyance poses extra challenges in the online scheduling
setting. We can no longer keep track of different algorithms' rewards
or their states at the time of a switch. Synchronization with an
algorithm's state at the time of a switch is crucial because for some
algorithms, such as posted-price-FIFO in particular, out of sync
execution can cost almost the entire social welfare of the
algorithm. (See, e.g., Example~\ref{example:syncing}.)\footnote{ 
We also show that in a continuous time setting no algorithm can get a sub-linear regret, in Appendix~\ref{sec:app-proofs}.} 
We therefore
consider a slightly modified benchmark where the performance of an
algorithm is measured according to the welfare it accumulates if it is
periodically (randomly) restarted with an empty state. These random
restarts do not significantly affect the performance of algorithms
such as posted-price-FIFO in the stochastic setting.

We give a reduction from the non-clairvoyant setting to a multi-armed
bandit (MAB) problem. The main challenge in this reduction is to
couple the random restarts of the expert algorithms with the times at
which the MAB algorithm decides to switch experts, in a manner that
ensures that the restarts are independent of the internal coin flips
of the MAB algorithm. To do so, we partition the scheduling process
into mini-batches synchronized with random restarts, and run the MAB
algorithm over this batched instance. Because switching between
algorithms happens exactly at the time of a random restart, it becomes
possible for us to sync with the states of the expert algorithms.

Random restarts can once again break the incentive properties of the
overall mechanism. We need to take care to ensure that jobs that are
caught in the middle of a restart cannot benefit by misreporting their
arrival or deadline. This necessitates a careful redesign of the
switching protocol for non-clairvoyant settings. As in the clairvoyant
setting, we obtain the optimal dependence within polylogarithmic
factors of the regret on the time horizon.

\paragraph{Outline.} In Section~\ref{sec:clair} we present a truthful online
algorithm for the clairvoyant setting along with an upper
bound on its regret. We extend both the truthfulness and regret
guarantees to the non-clairvoyant setting in
Section~\ref{sec:non-clair}. Section~\ref{sec:lower-bounds} presents matching lower bounds on regret.



\section{Model and definitions}

\subsection{The online job scheduling problem}

An instance of the online job scheduling problem consists of a finite set of jobs $\jobs$, a time horizon $T$, and the number $m$ of resources (machines) available per unit of time.
Each job $j\in\jobs$ arrives at time $a_j\in[T]$, has a deadline $d_j\in[a_j,a_j+\dmax]$ and a processing length $l_j\in[0,\lmax]$. Assume all the deadlines are in the time horizon $[T]$. Completing each time unit of job $j\in\jobs$ generates a value-per-length $v_j\in[0,\vmax]$. 

\paragraph{Online scheduling algorithms.} An online scheduler is an algorithm that determines which jobs to schedule and when, and how much to charge each scheduled job. Each job $j$, at the time of its arrival, reports its arrival time, deadline, length, and value; this four-tuple is called the job's type. The scheduling algorithm determines whether or not to schedule the job (\emph{admission control step}) and, if the job is scheduled, maps it to a set $\tau_j$ of time units (\emph{scheduling step}) and charges it a payment $p_j$. If $|\tau_j\cap [a_j,d_j]|\ge l_j$, that is, the job is allocated at least $l_j$ time units before its deadline, then the job obtains a utility of $(v_j-p_j)\cdot l_j$. The schedule produced by the algorithm is feasible if no more than $m$ jobs are assigned to each time unit. 

We now discuss various features of online scheduling algorithms:
\begin{itemize}[leftmargin=0pc,itemsep=4pt]
\item[] {\bf Preemption:} We say that the algorithm is {\em non-preemptive} if the set $\tau_j$ consists of contiguous time units for every job $j$. In other words, when a job is started, the algorithm processes it without pausing until it is finished. 
\item[] {\bf Truthfulness:} A scheduling algorithm is {\em truthful} if for every job $j$, fixing the reported types of jobs in $\jobs_{-j}$, job $j$'s utility is maximized by reporting its true type. Jobs can misreport any of the four components of their type, however, following convention we assume that jobs cannot report an earlier arrival time.
\item[] {\bf Promptness:} A scheduling algorithm is {\em prompt} if for every job $j$, the job's allocation and payment, $(\tau_j, p_j)$, are determined at the time of the job's arrival. At times we will refer to a weaker property: an algorithm is {\em order respecting} if for every job $j$, the job's allocation and payment, $(\tau_j, p_j)$, are functions of jobs in $\jobs$ that arrive prior to $j$ and not of those jobs that arrive after $j$.
\item[] {\bf Clairvoyance:} 
The \emph{clairvoyant scheduling} problem is the setting where every job $j$ reports its length $l_j$ to the scheduling mechanism, together with other parts of its type, upon its arrival, whereas in the \emph{non-clairvoyant scheduling problem} jobs do not report their lengths upon arrival. In fact, the scheduling mechanism observes the length of a job only after it completes the job. Since the length of job $j$ is unknown prior to its completion in the non-clairvoyant scheduling, we have to slightly modify other aspects of the setting: 
\begin{itemize}
\item we change the definition of deadline $d_j$ to denote the latest time that $j$ can be started.\footnote{Note the difference with deadlines in the clairvoyant setting, where $d_j$ was defined to be the latest time that $j$ could be completed.}
\item We do not allow preemption in the non-clairvoyant setting. 
\end{itemize} 
With these two modifications, it is indeed guaranteed that if a job $j$ is allocated at a time no later than its deadline, then it will be scheduled properly, i.e. it will be given enough time to be completed. 
\end{itemize}




Let $\Alg$ denote an online scheduling algorithm, and let $J(\Alg) = \jobs\cap\{j: |\tau_j\cap [a_j,d_j]|\ge l_j\}$ denote the set of jobs that receive service in $\Alg$. We use $\W{t}{\Alg,\jobs}$ to denote the value generated by the algorithm at time unit $t$:
\begin{align*}
  \W{t}{\Alg,\jobs} = \sum_{j\in J(\Alg):\, t\in\tau_j} v_j
\end{align*}
The total value generated by the algorithm, a.k.a. its {\em social welfare}, is given by:
\begin{align*}
  \W{}{\Alg,\jobs} = \sum_{t=1}^T \W{t}{\Alg,\jobs}  = \sum_{j\in J(\Alg)} v_j l_j 
\end{align*}
We drop the argument $\jobs$ when it is clear from the context.

\paragraph{Regret minimization in online scheduling.}

We consider an online learning problem, where we are given a finite set of scheduling algorithms and our goal is to compete with the best one in hindsight with respect to the social welfare objective. Let $\{\Algi{1},\ldots,\Algi{n}\}$ be the set of $n$ online schedulers. Given an instance $\jobs$, let $\opt(\jobs) = \max_{i\in [n]} \W{}{\Algi{i},\jobs}$ denote the social welfare obtained by the hindsight optimal algorithm on this instance. Let $\Alg$ denote our online scheduling algorithm. The regret of $\Alg$ is defined as:
\begin{align*}
  \reg(\Alg) \triangleq \max_{\jobs} \left( \opt(\jobs) - \W{}{\Alg, \jobs} \right)
\end{align*}

\subsection{Learning from expert advice}

We will reduce the regret minimization problem for online scheduling to the problem of learning from expert advice. In the latter, we are given $n$ experts indexed by $i$. In each time step $t\in [T]$, the online algorithm must choose a (potentially random) expert, $i_t\in [n]$, to follow. An adversary then reveals a reward vector $\{\rit{i}{t}\}$. We assume that the adversary is oblivious, that is, it cannot observe the internal coin flips of the algorithm. The total payoff of expert $i$ is given by $\sum_{t\in [T]} \rit{i}{t}$. The payoff of the algorithm is given by $\expect{\sum_{t\in [T]} \rit{i_i}{t}}$, where the expectation is taken over the algorithm's internal coin flips. The regret of the algorithm is:
\[
\left( \max_{i\in [n]} \sum_{t\in [T]} \rit{i}{t} \right) - \expect{\sum_{t\in [T]} \rit{i_i}{t}}
\]
Let $\rmax$ denote an upper bound on $\rit{i}{t}$ for any $i\in [n]$ and $t\in [T]$. Then, several different online algorithms are known to achieve a regret of $O(\rmax\sqrt{T\log n})$, and this bound is tight \cite{freund1995desicion,kalai2005efficient,cesa2006prediction}. 

\paragraph{Experts with switching costs.} This is a variant of the problem of learning from expert advice in which the algorithm faces a {\em switching cost} of $C$ units every time it switches from one expert to another in consecutive time steps. In particular, the payoff of the algorithm is given by $\expect{\sum_{t\in [T]} \rit{i_i}{t}} - C \left|\{t\in [T] : i_t\ne i_{t-1} \}\right|$. The first term corresponds to the rewards and the second corresponds to the switching cost. Accordingly, the regret of the algorithm is:
\[
\left( \max_{i\in [n]} \sum_{t\in [T]} \rit{i}{t} \right) - \expect{\sum_{t\in [T]} \rit{i_i}{t}} + C\,\expect{\left|\{t\in [T] : i_t\ne i_{t-1} \}\right|}
\]
\begin{theorem}[\citet{kalai2005efficient}]
  \label{thm:regret-switching}
There is an algorithm $\switcher(C)$ for the experts problem with a switching cost of $C$ such that 
\[  \reg(\switcher(C)) \leq O\left(\sqrt{R(R+C)T\log n}\right) .\]
\end{theorem}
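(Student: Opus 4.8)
The plan is to realize $\switcher(C)$ as a \emph{lazy} coupling of the multiplicative-weights (Hedge) algorithm, run with a learning rate $\eta$ that we keep free until the very end, and then to bound the two parts of the regret --- the reward part and the switching part --- separately. Replace rewards by losses $\ell_{i,t}=R-\rit{i}{t}\in[0,R]$ (this changes neither the problem nor the regret), and recall that Hedge keeps weights $w_{i,1}=1$, $w_{i,t+1}=w_{i,t}e^{-\eta\ell_{i,t}}$, $W_t=\sum_i w_{i,t}$, and at step $t$ plays expert $i$ with probability $p_{i,t}=w_{i,t}/W_t$. Plain Hedge may reshuffle its distribution completely between consecutive steps, forcing $\Theta(T)$ switches; instead $\switcher(C)$ samples from exactly the same distributions $p_{\cdot,t}$ but couples successive draws: sample $i_1\propto w_{i,1}$, and for $t\ge 2$, with probability $w_{i_{t-1},t}/w_{i_{t-1},t-1}=e^{-\eta\ell_{i_{t-1},t-1}}\le 1$ keep $i_t=i_{t-1}$, and with the remaining probability resample $i_t\propto w_{i,t}$ afresh.

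The first step is to verify the invariant $\prob{i_t=i}=w_{i,t}/W_t$ for all $t$ and $i$, by induction on $t$: conditioning on $i_{t-1}$ and adding the ``stay'' contribution $\tfrac{w_{i,t-1}}{W_{t-1}}\cdot\tfrac{w_{i,t}}{w_{i,t-1}}=\tfrac{w_{i,t}}{W_{t-1}}$ to the ``resample'' contribution $\big(\tfrac1{W_{t-1}}\sum_j(w_{j,t-1}-w_{j,t})\big)\tfrac{w_{i,t}}{W_t}=\tfrac{W_{t-1}-W_t}{W_{t-1}}\cdot\tfrac{w_{i,t}}{W_t}$ gives $\tfrac{w_{i,t}}{W_{t-1}}\cdot\tfrac{W_{t-1}}{W_t}=\tfrac{w_{i,t}}{W_t}$. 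Since the per-step marginals coincide with Hedge's, the algorithm's expected reward equals Hedge's, so the reward part $\max_i\sum_t\rit{i}{t}-\expect{\sum_t\rit{i_t}{t}}$ is at most the standard multiplicative-weights bound $O\!\left(\tfrac{\log n}{\eta}+\eta R^2T\right)$. For the switching part, a switch at step $t$ can occur only on a resample event, whose conditional probability given the past is $1-e^{-\eta\ell_{i_{t-1},t-1}}\le\eta\ell_{i_{t-1},t-1}\le\eta R$, so $\expect{|\{t:i_t\ne i_{t-1}\}|}\le\eta R T$ and the switching part of the regret is at most $C\eta R T$.

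Adding the two estimates, $\reg(\switcher(C))\le O\!\left(\tfrac{\log n}{\eta}+\eta R(R+C)T\right)$, and choosing $\eta=\sqrt{\log n/(R(R+C)T)}$ --- which lies in the range where the Hedge analysis applies whenever $T=\Omega(\log n)$, the only regime of interest since otherwise the ``never switch'' algorithm already meets the bound --- yields $\reg(\switcher(C))=O(\sqrt{R(R+C)T\log n})$. I expect the crux to be the first step of the second paragraph: recognizing that resampling with precisely the weight-shrinkage probability $e^{-\eta\ell}$ is the coupling that simultaneously (i) leaves the Hedge marginals, hence the reward bound, untouched, and (ii) makes ``staying'' overwhelmingly likely, so switches are as rare as the incurred loss is small. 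Once this coupling is in hand the two calculations above are routine. This construction can be viewed as a lazy incarnation of follow-the-perturbed-leader in the spirit of \citet{kalai2005efficient}; a coarser alternative --- running an off-the-shelf no-regret algorithm on blocks of $\Delta$ consecutive rounds and switching only at block boundaries --- only gives a weaker $\tilde O(T^{2/3})$-type bound and does not recover the theorem.
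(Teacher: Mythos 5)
The paper states this theorem only as a citation to \citet{kalai2005efficient} and does not prove it, so there is no in-paper argument to compare against. Your proof is correct and self-contained: the lazy resampling rule (stay with probability $w_{i_{t-1},t}/w_{i_{t-1},t-1}=e^{-\eta\ell_{i_{t-1},t-1}}$, else redraw from $w_{\cdot,t}/W_t$) does preserve the Hedge marginals --- your induction calculation checks out --- and since the adversary is oblivious this means the coupled algorithm's expected reward equals Hedge's, giving the reward-regret bound $O(\log n/\eta+\eta R^2T)$; the per-step resample probability $1-e^{-\eta\ell}\le\eta R$ bounds expected switches by $\eta RT$, hence switching regret $C\eta RT$; and optimizing $\eta=\sqrt{\log n/(R(R+C)T)}$ gives the stated $O(\sqrt{R(R+C)T\log n})$, with your remark about the $T\ge\log n$ regime disposing of the boundary case. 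The one thing worth flagging is that this is not really the \citet{kalai2005efficient} argument: you are constructing what is essentially the ``shrinking dartboard'' coupling (Geulen, V\"ocking, and Winkler), whereas Kalai--Vempala's route is follow-the-perturbed-leader with a single lazily drawn perturbation, bounding the expected number of leader changes directly. Both yield the same rate and both are ``lazy'' in spirit, as you note; your version has the nice feature of reusing the off-the-shelf Hedge regret bound verbatim via a marginal-preserving coupling, while FTPL avoids multiplicative weights entirely and is arguably what the paper's citation intends. Either way the theorem as stated is established.
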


\paragraph{Multi-armed bandit setting.} In the multi-armed bandit (MAB) setting, the online algorithm may only observe the reward $\rit{i_t}{t}$ of the expert that it selects at time $t$, and cannot observe the remaining rewards. Algorithms for MAB typically mix some exploration alongside following the recommendation of an online learning algorithm for the full-information setting. \begin{theorem}[\citet{auer1995gambling}]
	\label{thm:exp3}
	There is an algorithm $\switcherbandit$ for the MAB problem  with 
	\[  \reg(\switcherbandit) \leq O\left(R\sqrt{Tn\log n}\right) .\]
\end{theorem}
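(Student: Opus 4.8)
The plan is to invoke the standard \textsc{Exp3} algorithm of \citet{auer1995gambling}, rescaled to reward bound $R$: normalize every observed reward to $[0,1]$ by dividing by $R$, run \textsc{Exp3} on the normalized instance, and multiply the resulting regret bound by $R$ at the end. The algorithm maintains a weight $w_i^{(t)}$ per arm, initialized to $1$, and at step $t$ plays arm $i$ with probability $p_i^{(t)} = (1-\gamma)\,w_i^{(t)}/W_t + \gamma/n$, where $W_t=\sum_j w_j^{(t)}$ and $\gamma\in(0,1]$ is a parameter fixed at the end. After seeing the normalized reward $r_{i_t}^{(t)}$ of the played arm it forms the importance-weighted estimate $\hat r_i^{(t)} = (r_{i_t}^{(t)}/p_{i_t}^{(t)})\cdot\mathbf{1}[i=i_t]$ and sets $w_i^{(t+1)} = w_i^{(t)}\exp(\gamma\hat r_i^{(t)}/n)$. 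The uniform-mixing term $\gamma/n$ forces $p_i^{(t)}\ge\gamma/n$, hence $\hat r_i^{(t)}\le n/\gamma$ and $\gamma\hat r_i^{(t)}/n\le 1$, a fact used in the potential bound below.

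The heart of the argument is a telescoping estimate of $\ln(W_{T+1}/W_1)$. For the \emph{upper bound}, I would write $W_{t+1}/W_t=\sum_i (w_i^{(t)}/W_t)\exp(\gamma\hat r_i^{(t)}/n)$, apply $e^x\le 1+x+x^2$ valid on $[0,1]$, use $w_i^{(t)}/W_t=(p_i^{(t)}-\gamma/n)/(1-\gamma)\le p_i^{(t)}/(1-\gamma)$, and invoke the two identities $\sum_i p_i^{(t)}\hat r_i^{(t)}=r_{i_t}^{(t)}$ and $\sum_i p_i^{(t)}(\hat r_i^{(t)})^2 = r_{i_t}^{(t)}\hat r_{i_t}^{(t)}\le\hat r_{i_t}^{(t)}=\sum_i\hat r_i^{(t)}$. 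Taking logs via $1+x\le e^x$ and summing over $t$ gives $\ln(W_{T+1}/W_1)\le \tfrac{\gamma}{n(1-\gamma)}\sum_t r_{i_t}^{(t)} + \tfrac{\gamma^2}{n^2(1-\gamma)}\sum_t\sum_i\hat r_i^{(t)}$. For the \emph{lower bound}, for any fixed comparator arm $k$ we have $\ln(W_{T+1}/W_1)\ge\ln(w_k^{(T+1)}/n)=\tfrac{\gamma}{n}\sum_t\hat r_k^{(t)}-\ln n$, since $W_1=n$. Combining the two and multiplying through by $n/\gamma$ isolates $\sum_t\hat r_k^{(t)}$ in terms of the algorithm's realized reward, the quantity $\sum_t\sum_i\hat r_i^{(t)}$, and an additive term $n\ln n/\gamma$.

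Finally I would take expectations over the algorithm's coin flips. Conditioning on the history through step $t$, $\expect{\hat r_k^{(t)}}=r_k^{(t)}$ and $\expect{\sum_i\hat r_i^{(t)}}=\sum_i r_i^{(t)}\le n$; writing $G_{\max}=\max_k\sum_t r_k^{(t)}\le T$ and $G=\expect{\sum_t r_{i_t}^{(t)}}$, the combined inequality rearranges to $G_{\max}-G = O(\gamma T + n\ln n/\gamma)$ (using $(1-\gamma)\le 1$ and $G_{\max}\le T$). Optimizing $\gamma=\Theta(\sqrt{(n\log n)/T})$, and taking $\gamma=1$ in the regime where this exceeds $1$ (where the claimed bound is trivial), balances the two terms to give regret $O(\sqrt{Tn\log n})$ on the normalized scale, hence $\reg(\switcherbandit)=O(R\sqrt{Tn\log n})$ after undoing the normalization. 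The only step that needs genuine care is the second-moment control in the potential upper bound --- bounding $\sum_i p_i^{(t)}(\hat r_i^{(t)})^2$ and checking that the exponent lies in $[0,1]$ so that the quadratic bound on $e^x$ is legitimate; the rest is bookkeeping and the final choice of $\gamma$.
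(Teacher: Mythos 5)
Your proof is a correct reconstruction of the standard \textsc{Exp3} analysis from \citet{auer1995gambling}, which is exactly what the paper cites for this theorem (the paper itself gives no proof, treating it as a black-box result). The potential argument, the importance-weighted estimator, the second-moment bound, the unbiasedness-in-expectation step, and the final tuning of $\gamma$ (together with the trivial-regime observation when $\gamma\ge 1$) all match the original analysis; the rescaling by $R$ to pass from the $[0,1]$ case to the general reward range is the right way to get the stated constant.
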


We can further consider an extension of the MAB setting to the setting with a switching cost. 
For this problem, we mainly use that there is no algorithm with a regret of $o(T^{2/3})$ \citep{Dekel2014}, to get a similar lower bound for our problem. 
The precise statement of their result is in Section~\ref{sec:lower-bounds}.

\section{The clairvoyant setting}
\label{sec:clair}
In this section we consider the online scheduling problem in the clairvoyant setting, namely where every job reports its length (in addition to the rest of its type) at the time of its arrival. We are given $n$ online scheduling algorithms, $\Algi{1},\ldots,\Algi{n}$, and our goal is to design an online algorithm that minimizes regret relative to the best of the $n$ algorithms in hindsight. 
We begin by showing how to switch between algorithms in a way that preserves truthfulness in Section~\ref{sec:truth}. 
In Section~\ref{sec:red-exp} we present a reduction from this problem to the problem of learning from expert advice with switching costs. In Section~\ref{sec:lower-bounds} we prove that the regret guarantee we obtain from the reduction is optimal. 




\subsection{Truthful switching}
\label{sec:truth}

In this section, we show how to switch between truthful mechanisms while preserving truthfulness and making sure the loss in welfare is bounded. 
We consider the following setting. Let $A$ and $B$ be two \orderresp truthful scheduling mechanisms. 
Our goal is to switch from mechanism $A$ to mechanism $B$ at time 0.
(This is just a normalization of the time index for ease of notation.) 
We consider show how to perform this switch in the clairvoyant
setting, and extend our algorithm to the non-clairvoyant setting in
Section~\ref{sec:nc-truth}. 
The loss in welfare from our switching algorithm is captured in the following lemma. 
\begin{lemma} \label{lem:truthfulswitching}
	Given \orderresp truthful mechanisms $A$ and $B$, 
	there exists an \orderresp truthful mechanism $C$ that obtains welfare at least 
	$$ \sum_{t\le 0} W_t(A) + \sum_{t\geq1} W_t(B) - 2v_{\max} d_{\max}m.$$ 
\end{lemma}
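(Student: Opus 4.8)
The plan is to build $C$ by running $A$ for times $t\le 0$ and $B$ for times $t\ge 1$, with a carefully designed reconciliation across the switch. Concretely, $C$ simulates $A$ on the subsequence of jobs that report arrival $\le 0$ and simulates $B$ on the entire reported sequence. A job reporting arrival $\le 0$ is given exactly $A$'s allocation and payment; in particular $C$ honours the part of $A$'s schedule that spills into times $\{1,\dots,\dmax\}$, which is all the spillover there can be since such a job has deadline at most $\dmax$. A job reporting arrival $\ge 1$ is given $B$'s allocation and payment, except that $C$ must first reserve the at most $m\,\dmax$ machine-slots of $A$'s spillover, all of which lie inside $[1,\dmax]$. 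The reconciliation handles the jobs whose $B$-schedule \emph{meets} $[1,\dmax]$ as ``intermediate'': these are served by a bounded sub-protocol (roughly, postpone the reported arrival of such a job to just past $\dmax$ before invoking $B$, and otherwise drop it), so that every job $C$ actually serves is served either exactly as in $A$ (for $t\le 0$) or exactly as in $B$ on slots disjoint from the reserved spillover.

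For the welfare bound I would argue as follows. By construction $C$'s value at every time $t\le 0$ equals $W_t(A)$, since only $\le 0$-arrival jobs can generate value there and $C$ completes them exactly when $A$ does; hence $C$ collects at least $\sum_{t\le 0}W_t(A)$. For $t\ge 1$ the only value $C$ foregoes relative to $B$ is (i) whatever $A$ generates past time $0$, for which the target does not charge us, and (ii) the $t\ge 1$ value $B$ generates from jobs whose $B$-schedule meets $[1,\dmax]$. For any such job the $t\ge 1$ portion of its $B$-schedule lies within $[1,2\dmax]$ (a $\le 0$-arrival job has deadline $\le\dmax$, and a $\ge 1$-arrival job meeting $[1,\dmax]$ has arrival $\le\dmax$), so by feasibility of $B$ this value totals at most $\vmax\cdot 2\dmax\cdot m$. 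Summing, $C$ obtains at least $\sum_{t\le 0}W_t(A)+\sum_{t\ge 1}W_t(B)-2\vmax\dmax m$. The same interval estimate also bounds the total value of all of $A$'s spillover jobs, which is what makes dropping intermediate jobs affordable and what lets $C$ serve the non-foregone $B$-jobs without conflict.

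Order-respectingness of $C$ is immediate: each job's outcome is a function of $A$'s or $B$'s decision for it together with the reserved spillover set, all of which depend only on earlier-arriving jobs. For truthfulness, a job whose reported arrival stays within one regime inherits truthfulness from $A$ or $B$ directly. A job with true arrival $\le 0$ that reports a later arrival into $B$'s regime has true deadline at most $\dmax$, so any slots it could be awarded there lie beyond $\dmax$, outside its true window, giving nonpositive utility --- no better than the nonnegative utility it gets by reporting truthfully, using individual rationality of $A$. The genuinely hard part, and the main obstacle, is a job whose \emph{true} window straddles the reserved region near the switch: it must not gain by delaying its reported arrival so as to be treated by $B$ rather than by the intermediate sub-protocol. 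I would resolve this by designing the intermediate sub-protocol so that a job caught in the window is treated exactly as it would be had it reported the later arrival (the ``postpone to just past $\dmax$'' rule above), making the deviation a no-op; checking that this can be done while simultaneously preserving feasibility and the $2\vmax\dmax m$ welfare accounting is the crux of the argument.
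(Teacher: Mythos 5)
Your high-level skeleton (run $A$ to completion on the $\le 0$ jobs, reserve its spillover in $[1,\dmax]$, hand everything else to $B$, and charge the loss to the $\le 2\vmax\dmax m$ machine-slots in $[1,2\dmax]$) matches the paper's accounting. But the step you yourself flag as "the crux" --- what to do with a job whose $B$-schedule meets $[1,\dmax]$ --- is left unresolved, and neither of the two sub-protocols you sketch survives scrutiny. If such jobs are \emph{dropped}, truthfulness fails: a job with true deadline well past $\dmax$ that $B$ happens to schedule inside $[1,\dmax]$ strictly profits by reporting a later arrival so as to be classified as non-intermediate and served. If instead their reported arrivals are \emph{postponed} past $\dmax$ before invoking $B$, then $B$ is being simulated on a modified instance; since $B$ is only \orderresp, the allocations of every subsequent job can change, so your claim that non-intermediate jobs are served "exactly as in $B$" --- and with it the entire comparison against $\sum_{t\ge 1}W_t(B)$ for the \emph{true} $B$ --- collapses. (Postponement is exactly what the paper does in the non-clairvoyant setting, but there the benchmark is deliberately weakened to a restarted version of $B$ to absorb this divergence; Lemma~\ref{lem:truthfulswitching} is stated against the unmodified $B$.)

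The paper's mechanism resolves this differently: $B$ is always simulated on the true reported input; jobs arriving after $0$ with deadline $\le \dmax$ are rejected outright (harmless for truthfulness since any slot they could obtain lies outside their true window); and every other job inherits $B$'s accept/reject decision and payment verbatim, receiving its $B$-slots when available and otherwise the \emph{earliest available later} slot. Two structural lemmas then do the work you are missing: every replacement slot is later than the slot it replaces (so no job completes earlier in $C$ than in $B$, which kills the deviation of padding one's deadline to get a cheaper price in $B$ while still finishing on time in $C$), and all replacement slots precede the first "free" slot, so $C$ resynchronizes with $B$'s schedule within $\dmax$ steps and the total shortfall is at most $2m\vmax\dmax$. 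Without an argument of this kind, your proposal does not yet establish either the truthfulness or the welfare bound for the jobs caught at the switch.
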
 
In particular, all jobs that arrive by time 0 and are completed by mechanism $A$ are also completed by $C$. 
We can compose any number of ``switching'' steps, losing an additive $2v_{\max} d_{\max}m$ amount in welfare each time. 
\begin{theorem} 
	\label{thm:manyswitches} 
Suppose we wish to switch among many \orderresp truthful mechanisms as follows: 
start with $A_0$ at time 1, switch to $A_1$ at time $t_1$, then to $A_2$ at time $t_2$ and so on till you switch to $A_L$ at time $t_L$ for some $L \in \integers_+$. Let $t_0 = 0 $ and $t_{L+1} = T$ for notational convenience. 
Then there is an \orderresp truthful mechanism whose welfare is at least 
\[  \sum_{i=0}^{L} \sum_{t \in (t_i, t_{i+1}]}  W_t(A_i) -  2Lv_{\max}m d_{\max}. \]
\end{theorem}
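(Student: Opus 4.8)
The plan is a straightforward induction on the number of switches $L$, invoking Lemma~\ref{lem:truthfulswitching} once for each switch. Set $C_0 := A_0$ (run from time $1$ onward), and for $k=1,\dots,L$ define $C_k$ to be the mechanism produced by Lemma~\ref{lem:truthfulswitching} applied to the pair $A := C_{k-1}$ and $B := A_k$, with the time axis shifted so that the switch time $t_k$ plays the role of time $0$. Since $A_k$ is \orderresp truthful by hypothesis and $C_{k-1}$ is \orderresp truthful by the induction hypothesis, Lemma~\ref{lem:truthfulswitching} immediately yields that each $C_k$, and in particular $C_L$, is \orderresp and truthful. Thus the truthfulness assertion needs no separate argument---it is inherited step by step---and the entire content of the theorem lies in tracking the welfare loss.

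For the welfare bound I would carry the induction hypothesis in \emph{prefix} form: for every time $s$,
\[
\sum_{t\le s} W_t(C_k)\ \ge\ \sum_{i=0}^{k-1}\ \sum_{\substack{t\in(t_i,t_{i+1}]\\ t\le s}} W_t(A_i)\ +\ \sum_{t\in(t_k,s]} W_t(A_k)\ -\ 2k\,v_{\max}\,m\,d_{\max},
\]
where the sum over $i$ is empty (equal to $0$) when $k=0$. The base case $k=0$ is immediate since $C_0=A_0$ and $t_0=0$. For the inductive step, Lemma~\ref{lem:truthfulswitching} applied in passing from $C_{k-1}$ to $C_k$ gives $\sum_{t\le s}W_t(C_k)\ge \sum_{t\le\min(s,t_k)}W_t(C_{k-1})+\sum_{t_k<t\le s}W_t(A_k)-2v_{\max}d_{\max}m$; substituting the prefix hypothesis for $C_{k-1}$ at the time $\min(s,t_k)$ into the first term and using $t_{k-1}<t_k$ (so that the segments $(t_i,t_{i+1}]$ with $i\le k-1$ all end at or before $t_k$), the right-hand side collapses to exactly the claimed bound for $C_k$. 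Instantiating the hypothesis for $C_L$ at $s=T$, and noting $t_{L+1}=T$ together with the fact that every job completes by time $T$, recovers precisely the statement of the theorem, with total loss $2L\,v_{\max}\,m\,d_{\max}$ because each of the $L$ invocations of the lemma contributes one deficit term.

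The one point that needs care---and the only real obstacle---is that Lemma~\ref{lem:truthfulswitching} as stated bounds only the \emph{total} welfare of the merged mechanism, whereas the induction above needs the prefix bound $\sum_{t\le s}W_t(C)\ge\sum_{t\le\min(s,0)}W_t(A)+\sum_{1\le t\le s}W_t(B)-2v_{\max}d_{\max}m$ for each intermediate mechanism $C$. This strengthening is genuine but harmless: the welfare the merging construction sacrifices is confined to the bounded ``switching-delay'' window immediately following the switch, so once $s$ lies past that window the full $2v_{\max}d_{\max}m$ deficit has already been accounted for in every prefix; I would therefore extract the prefix inequality from the proof of Lemma~\ref{lem:truthfulswitching} rather than from its black-box statement. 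Alternatively, one can sidestep this by observing that $C_{k-1}$ runs $A_{k-1}$ faithfully once its last synchronization completes, hence $\sum_{t>t_k}W_t(C_{k-1})\le\sum_{t>t_k}W_t(A_{k-1})$, and subtracting this from the total-welfare hypothesis to recover the needed prefix bound. With either fix in place, the induction closes mechanically.
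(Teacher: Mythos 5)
Your proposal is exactly the paper's argument: the paper's proof is a one-line iterated application of Lemma~\ref{lem:truthfulswitching}, composing $B_{i+1}$ from $B_i$ and $A_{i+1}$ at time $t_{i+1}$ and accumulating a $2v_{\max}d_{\max}m$ loss per switch. The prefix-welfare subtlety you flag is real --- the lemma's statement bounds only total welfare while the iteration needs $\sum_{t\le t_k}W_t(C_{k-1})$, a point the paper silently elides --- and your fix of carrying the induction hypothesis in prefix form (extracted from the lemma's proof, which shows $C$ matches $A$'s welfare up to the switch and loses only within the bounded synchronization window afterward) is sound.
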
 
\begin{proof}
Let $B_1$ be the mechanism obtained by applying Lemma \ref{lem:truthfulswitching} to switch from $A_0$ to $A_1$ at time $t_1$. 
Apply the lemma again to switch from $B_1$ to $A_2$ at time $t_2$; let the resulting mechanism be $B_2$. 
Continuing this way, we apply the lemma to switch from $B_i$ to $A_{i+1}$ at time $t_{i+1}$,  to get mechanism $B_{i+1}$, for all $i$ up to $L-1$. 
The resulting mechanism at the end, $B_L$, is the one we want. 
\end{proof} 
%

In the rest of this section we prove Lemma \ref{lem:truthfulswitching}. 

\subsubsection{The online switching algorithm}
\begin{definition}
The mechanism  $C$ (Lemma~\ref{lem:truthfulswitching}) is as follows. 
\hrule
\vspace{0.05in}
\begin{enumerate}[itemsep=4pt,topsep=2pt,parsep=0pt,partopsep=0pt,leftmargin=25pt]
\item For jobs that arrive by time $0$, mimic mechanism  $A$ and return the same allocation, schedule  and prices. 
Observe that jobs that are scheduled in this step are terminated by
time $\dmax$.
\item Mark the remaining time slots in $[1,\dmax]$ as \emph{unavailable}. 
This means that for all jobs $j$  that were not considered in the previous step (because $a_j > 0$)  
and have deadline $d_j\le d_{\max}$, we decline service and charge a price of $0$.
\item For all remaining jobs, i.e., jobs $j$ with $a_j>0$ and $d_j> d_{\max}$, consider the jobs in the order of arrival,  and do the following: 
  \begin{enumerate}            
  \item If $B$ rejects $j$, then reject $j$.
	\item     If there are not enough slots available to cover $j$'s length prior to its deadline, reject $j$. 
  \item Otherwise, accept and schedule $j$ in a ``best effort'' manner.
  Specifically, assign to the job
all of the slots that it gets in $B$ and that are still
available in $C$'s schedule. If any of these slots is unavailable,
replace it with the earliest available slot in $C$'s schedule.
We call these newly assigned slots the ``replacement'' slots for job $j$. 
  Charge $j$ the same payment as in mechanism $B$. 
  \end{enumerate}
\end{enumerate}
\hrule
\end{definition}

\paragraph{Design choices.} We explain the design choices made in the above mechanism. 
In step (1) we continue to process jobs that arrive by time 0 according to mechanim $A$. 
If we abruptly stop mechanism $A$, then there may be an incentive for some jobs to lie so that they get scheduled by time 0. 
In step (2) we make the remaining slots unavailable. 
Why not directly go to step (3) and schedule jobs that $B$ has accepted in a best effort manner? 
One of the properties we need for truthfulness to hold is that jobs
that arrive after time 0 finish at a time in mechanism $C$ that is no
earlier than their finish time in mechanism $B$.
Otherwise there may be an incentive for a job to lie so that it gets
accepted in $B$ but is scheduled to finish after its true deadline, 
whereas mechanism $C$ ends up scheduling it within its true deadline. 
Lemma~\ref{lem:replacement} below shows that the algorithm $C$ satisfies this property 
In step (3) (a) if we start considering jobs that $B$ rejected because we have some more available slots than $B$, we might break the truthfulness of $B$. 
Finally, in step (3) (c) we first assign the same slots to the job as
in $B$ in order to ensure the no early completion property. 
Assigning  the remaining available slots in the chronological order
is crucial for the welfare analysis.

\subsubsection{Truthfulness}

We begin by proving the no early completion property.
\begin{lemma} \label{lem:replacement}
	Any replacement slot assigned in step (3) (c) is always later in
        time relative to the unavailable slot it replaces. 
\end{lemma}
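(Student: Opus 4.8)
The plan is to reduce the lemma to a single generic replacement event and analyze the state of $C$'s schedule at that moment. So fix a step where, while processing some job $j$, the mechanism finds a slot $s$ that $B$ assigns to $j$ (write $\tau_j^B$ for the set of slots $B$ gives $j$) but that is unavailable in $C$, and replaces it by $r$ — the earliest time slot $\ge a_j$ that is still available (fewer than $m$ jobs assigned) in $C$'s current partial schedule. I want to show $r \ge s$; this is the content of ``later in time relative to the unavailable slot it replaces''. Note first that $s \ge a_j$, since $s$ is a slot $B$ assigns to $j$.

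First I would record two invariants of $C$'s execution. \emph{(Monotonicity.)} Steps (1)--(2) make their blocking permanent and step (3) only ever assigns slots to jobs, so once a slot is unavailable in $C$ it remains unavailable for the rest of step (3). \emph{(Blocking.)} Every time slot in $[1,\dmax]$ is unavailable to the jobs handled in step (3): it is either occupied by a job carried over from $A$ (which all terminate by $\dmax$, as observed in step (1)) or explicitly marked unavailable in step (2); conversely steps (1)--(2) never touch a slot after $\dmax$, so during step (3) the only reason a slot $t > \dmax$ can be unavailable is that jobs admitted in step (3) occupy it. Finally I would isolate a counting claim: if $s$ is unavailable when we process $j$, i.e. exactly $m$ jobs occupy $s$ in $C$, then at least one of them holds $s$ as a \emph{replacement} slot rather than as one of its own $B$-slots — because at most $m$ jobs are assigned to time $s$ in $B$ and $j$ is one of them, so at most $m-1$ of the occupants of $s$ in $C$ can be there on account of $s$ lying in their $B$-schedule.

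Now I would split on $s$ versus $\dmax$. If $s \le \dmax$, then $r < s$ would force $r \le \dmax$, impossible by Blocking since $r$ was chosen available; hence $r \ge s$. If $s > \dmax$, then by the counting claim some job $j'$ occupies $s$ in $C$ as a replacement slot; since $s > \dmax$ and step (3) processes jobs in order of arrival, $j'$ was processed before $j$, so $a_{j'} \le a_j$. When $j'$ was given $s$ as a replacement, $s$ was \emph{its} earliest available slot at a time $\ge a_{j'}$, so every slot in $[a_{j'}, s-1]$ was already full at that earlier moment, hence (by Monotonicity) is still full now. Suppose $r < s$ for contradiction. Since $r$ is available now it was available at that earlier moment too, so $r \notin [a_{j'}, s-1]$; together with $r < s$ this gives $r < a_{j'}$, contradicting $r \ge a_j \ge a_{j'}$. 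Hence $r \ge s$ in this case as well (the same argument applies verbatim when $r$ is one of several replacements assigned to $j$, since $j$'s own earlier assignments only make slots fuller).

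The step I expect to be the crux is the $s > \dmax$ case, and specifically the interplay between Monotonicity and the arrival-order processing: a priori nothing stops the ``earliest available'' search from landing on a slot lying between $a_j$ and $s$ that only became free because the jobs occupying the region ahead of it happened to arrive later; the argument above rules this out precisely because the job $j'$ currently blocking $s$ arrived no later than $j$ and yet $s$ was already the first free slot it could find. The counting claim is the other place needing care once $m > 1$; for $m = 1$ it moreover upgrades the conclusion to the strict inequality $r > s$, since then $r = s$ is impossible.
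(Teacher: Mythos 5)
Your proof is correct and follows essentially the same route as the paper's: split on whether the blocked slot $s$ is before or after $\dmax$, and in the latter case use FIFO order and the ``earliest available slot'' rule to show every slot in $[a_j, s]$ is already full. You do add a useful explicit counting step (that some occupant of $s$ must hold it as a replacement slot, not merely a $B$-slot, which requires $B$'s feasibility at time $s$ together with $j$ itself claiming $s$ in $B$), a fact the paper's proof asserts without justification when it says ``$t$ itself is a replacement slot for some other job $j'$.''
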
 
\begin{proof}
	Suppose one of the slots assigned to a job $j$ in mechanism $B$, say at time $t$,  is unavailable in mechanism $C$. 
	If $t \leq \dmax$ then by construction the replacement slot is later. 
	Otherwise, $t$ itself is a replacement slot for some other job $j'$. 
	The arrival time of $j'$, $a_{j'}$ is no larger than $a_j$ because jobs are processed in FIFO order. 
	Since replacement slots are assigned in chronological order, 
	all slots in $[a_{j'},t-1]$ must have been unavailable when $t$ was assigned to $j'$. 
	Now all slots in $[a_j,t]$ are unavailable when we consider job $j$, so its replacement  for slot $t$ can only be later. 	
\end{proof}

\begin{lemma} \label{lem:truthfulness}
Mechanism $C$ is truthful. 
\end{lemma}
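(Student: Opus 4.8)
The plan is to fix a job $j$ together with the reported types of all other jobs and show $j$ cannot strictly increase its utility by misreporting. I would classify $j$ by its \emph{true} type: Type~I if $a_j \le 0$; Type~II if $a_j > 0$ and $d_j \le \dmax$; Type~III if $a_j > 0$ and $d_j > \dmax$. Two structural facts drive everything. First, every slot $C$ ever assigns in step~(3) lies in $(\dmax, T]$, since all slots in $[1,\dmax]$ are marked unavailable in step~(2) and slots $\le 0$ are in the past. Second, Types~I and~II always have true deadline $d_j \le \dmax$ — for Type~I because $a_j \le 0$ forces $d_j \le a_j + \dmax \le \dmax$. Consequently, if the report of a Type~I or Type~II job lands it in phase~2 it is rejected (utility $0$), and if it lands in phase~3 it can only receive slots after $\dmax \ge d_j$, hence is not served within its true deadline and has utility $\le 0$. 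Also, a Type~II or Type~III job can never report into phase~1, as that needs a reported arrival $\le 0 < a_j$. This already settles Type~II: every report gives utility $\le 0$ while truthful reporting lands $j$ in phase~2 with utility exactly $0$. For Type~I: a report with arrival $\le 0$ lands in phase~1, where $C$ reproduces $A$'s outcome; since $A$ is \orderresp, $j$'s outcome in $A$ depends only on jobs arriving by time $0$, so $j$ effectively faces mechanism $A$ on that sub-instance, which is truthful and (as we assume throughout, as is standard for scheduling mechanisms) individually rational (IR). Hence truthful is optimal among phase-1 reports and yields utility $\ge 0$, which beats every phase-2/phase-3 deviation.

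The substantive case is Type~III (and deviations of a Type~III job that stay within phase~3; note that a move to phase~2 gives utility $0$, no better than the truthful value established below). Here I would use Lemma~\ref{lem:replacement} (no early completion), truthfulness and IR of $B$, and the following observation: since phase-3 jobs are processed in arrival order and each is packed best-effort into currently-available slots only, the set of slots still available when $j$ is reached depends only on the reports of jobs processed before $j$ — in particular not on $j$'s reported length, value, or deadline, and it can only shrink when $j$ reports a later arrival (the only direction $j$ may lie about arrival). I would then record two consequences. (a) If $C$ rejects $j$ at step~(3b) under truthful reporting because fewer than $l_j$ slots in $(\dmax, d_j]$ are available, then under \emph{any} report of $j$ there are still fewer than $l_j$ available slots in $(\dmax, d_j]$ when $j$ is reached, so $C$ can never give $j$ at least $l_j$ slots inside $[a_j, d_j]$ and $j$'s true value is $0$. (b) If $C$ serves $j$ with positive value under some report $\theta'_j$, then — since $C$ serves only jobs $B$ accepts, and by Lemma~\ref{lem:replacement} $C$'s sorted assignment dominates $B$'s slot-by-slot — $B$ also serves $j$ within its true deadline under $\theta'_j$, at the same per-unit price that $C$ charges, so $j$'s utilities in $C$ and $B$ under $\theta'_j$ coincide.

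With (a) and (b) I would finish Type~III by cases on the truthful outcome. If $C$ rejects $j$ under truthful reporting, the rejection is either at step~(3a), i.e.\ $B$ rejects $j$, in which case truthfulness of $B$ gives $j$ a $B$-utility of $0$ under every report, so by (b) $C$ never gives $j$ positive value; or it is at step~(3b), covered by (a). Either way every report gives $j$ utility $\le 0$, matched by truthful reporting. If $C$ serves $j$ under truthful reporting, then $B$ serves $j$ within its deadline truthfully, so $u^C_j(\theta_j) = u^B_j(\theta_j) \ge 0$ by IR of $B$; and for any deviation $\theta'_j$, either it yields utility $\le 0$ (no positive value), or by (b) $u^C_j(\theta'_j) = u^B_j(\theta'_j) \le u^B_j(\theta_j) = u^C_j(\theta_j)$ by truthfulness of $B$. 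In every case truthful reporting is optimal, which completes the proof.

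I expect the main obstacle to be the slot-availability observation for phase~3 and its consequence (a): one must argue carefully that the number of available slots in the window $(\dmax, d_j]$ at the moment $j$ is processed is genuinely independent of $j$'s reported length, value, and deadline, and monotone in its reported arrival. This rests on the interplay of FIFO processing, the \orderresp property of $B$, and the chronological assignment of replacement slots — exactly the design choices flagged after the definition of $C$. Tie-breaking among equal arrival times and the reported-versus-true length bookkeeping in (b) also need a little care.
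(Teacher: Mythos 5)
Your proof follows the paper's proof quite closely: the same three-case split by where the mechanism $C$ handles the job (step 1, 2, or 3), the same reliance on Lemma~\ref{lem:replacement} (no early completion) to transfer truthfulness of $B$ to truthfulness of $C$ in phase~3. Types~I and~II are handled essentially the same way as the paper (the paper does not spell out the IR assumption you invoke, but it uses it implicitly in the same place).

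The point of divergence is that you explicitly confront a subtlety the paper's own proof passes over: step~(3b) means $C$ can reject a phase-3 job that $B$ accepts, so the ``on acceptance, $C$ mirrors $B$'' reduction is not airtight by itself. You try to close this via claim~(a): that the pool of available slots in $(\dmax,d_j]$ at the moment $j$ is reached does not depend on $j$'s reported deadline/length/value, and can only shrink when $j$ reports a later arrival. The first part is right, but the monotonicity-in-arrival part is the genuine gap. When $j$ reports a later arrival $a'_j>a_j$, the jobs $k$ with $a_j<a_k\le a'_j$ are now processed \emph{without} $j$ ahead of them, and since $B$ is merely \orderresp{} (not a specific FIFO rule), $B$ may now accept, reject, or re-slot those $k$'s in arbitrary ways; the resulting state of $C$ when it finally reaches $j$ can have strictly \emph{more} free slots in $(\dmax,d_j]$. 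The intuition that removing $j$ weakly helps later jobs get admitted (hence weakly fills slots) is specific to mechanisms like posted-pricing-FIFO; you would have to either prove it in this generality from the truthfulness and \orderresp{} axioms, or find a different argument for the later-arrival deviations of a Type~III job that $C$ truthfully rejects at (3b). You flagged this yourself as the main obstacle, correctly. It is worth noting that the paper's proof of this lemma is terse on this exact point and does not address the (3b) case explicitly; you identified a real issue, but your proposed resolution as written does not yet close it.
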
 
\begin{proof} 
We consider three cases depending on which step of the mechanism  handles the job. 
Recall that we assume that jobs cannot report an earlier arrival time.
\begin{enumerate}[itemsep=4pt,topsep=2pt,parsep=0pt,partopsep=0pt,leftmargin=*]
\item Suppose that $a_j\le 0$, which means that the job gets processed in step (1) and 
gets an allocation and payment as per mechansim $A$.  
If the job reports an arrival time $> 0$, then it is not processed in step (1) and gets none of the slots in 
$[a_j, d_j]$, because $d_j\le d_{\max}$, 
and all those slots are marked unavailable at the beginning of step (2).
Any other misreport means that the job still gets processed in step (1). 
  Now we can appeal to the fact that algorithm $A$ is truthful to assert that the job does not benefit from misreporting its type. 
  
\item Suppose that $a_j>0$ and $d_j\le d_{\max}$, which means it is processed in step (2). 
In this case, regardless of its actual report the job gets no slots in its time window. 

\item Suppose that $a_j > 0 $ and $d_j> d_{\max}$, which means it is processed in step (3). 
Since the job cannot report an earlier arrival time, it cannot be processed in step (1), and reporting 
a deadline $\leq \dmax$ means it gets no slots. 
Hence the only misreports we need to consider are such that the job is still processed in step (3). 

The truthfulness of $B$ should now imply that no misreport can be beneficial in $C$ as well. 
This is almost true since, for instance, the price paid is the same in both (on acceptance). 
However, there is a possibility that misreporting a later deadline in $B$ (possibly combined with a misreport of other parameters) results in a lower price, 
but that in $B$'s schedule the job finishes after its true deadline $d_j$.
This would be a non-beneficial misreport in $B$ but could be beneficial in $C$ if it actually finishes earlier than $d_j$ in $C$, while enjoying the lower price. 
Lemma \ref{lem:replacement} ensures this does not happen.
\end{enumerate}
\end{proof}
  
%
%

\subsubsection{Welfare}
Define a time slot $t> \dmax $ to be ``free'' if mechanism $C$ schedules fewer jobs in time $t$ than mechanism $B$. 
The \emph{number} of free slots at time $t$ is the difference, given
that it is non-negative, and zero otherwise. We first argue that there
are few free slots in $C$'s schedule.

\begin{lemma} \label{lem:freeslot}
	All replacement slots occur before the first free slot. 
\end{lemma}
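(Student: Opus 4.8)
We prove Lemma~\ref{lem:freeslot}, so fix an execution of $C$ and let $t^*$ be the first free slot (the smallest $t>\dmax$ at which $C$ schedules strictly fewer jobs than $B$); if no free slot exists the statement is vacuous. The plan rests on two structural facts. Fact A (monotonicity): in step~(3), $C$ processes the surviving jobs in order of arrival and only ever \emph{adds} jobs to its schedule, so the number of jobs $C$ has placed at any fixed time slot is non-decreasing throughout step~(3). Fact B: since $B$ schedules at most $m$ jobs at $t^*$ and $C$ schedules strictly fewer, $C$'s final load at $t^*$ is at most $m-1$; by Fact A its load at $t^*$ is at most $m-1$ at \emph{every} point, so $t^*$ is available the entire time step~(3) runs.

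The core argument is then the following. Take any replacement slot, assigned to a job $j$ at time $t$ in place of an unavailable slot $s$ of $j$'s $B$-schedule (with $s<t$ by Lemma~\ref{lem:replacement}). Because $t$ is chosen as the earliest available slot at the moment $j$ is processed, every slot of $j$'s search window strictly between $\dmax$ and $t$ is unavailable at that moment; such slots lie above $\dmax$, hence were not marked unavailable in step~(2), so they must be fully loaded (load $m$) in $C$. By Fact A they stay fully loaded through the end, so none of them is free. Consequently the first free slot satisfies $t^*\ge t$, which already gives ``at or before.''

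Two points need extra care, and I expect them to be the real work. (i) Upgrading $t^*\ge t$ to $t^*>t$ requires ruling out a slot that is simultaneously a replacement target and the first free slot. The handle here is the analysis of free slots themselves: since a free slot stays available throughout (Fact B applied to it), every $B$-job scheduled there that $C$ accepts is ``kept'' there by $C$ in step~(3)(c), so a free slot can arise only when the number of step-(3)(b) rejections of $B$-jobs placed at that slot exceeds the number of replacement jobs placed there; I would then argue that such rejections cannot become permanent before the relevant ``frontier'' slot has been saturated, which contradicts $t$ being free. (ii) The paragraph above tacitly uses that the replacement scan runs from $\dmax+1$ upward, whereas for a job $j$ with $a_j>\dmax$ the scan begins only at $a_j$; I must therefore also show no slot in $[\dmax+1,a_j-1]$ can be free once a replacement slot exists. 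This follows by tracing replacement chains: a slot $>\dmax$ is unavailable in $C$ only when it is full, and the first moment any such slot is full all of its occupants are ``kept'' jobs that also sit there in $B$, which forces the originally replaced slot to lie in $[1,\dmax]$; pushing this bookkeeping through bounds where replacement slots can occur. The monotonicity and earliest-available-slot parts are routine; the chain-tracing and the boundary case in (i) are where the care is needed.
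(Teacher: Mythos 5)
Your two structural facts --- loads are monotone during step~(3), and the first free slot $t^*$ stays available throughout --- are correct and both appear in the paper's proof. But the core step has a genuine gap, and the two issues you flag are not mop-up work. Your argument shows at best that the slots in $j$'s window strictly between $\dmax$ and $t$ that are \emph{not} already $j$'s own kept $B$-slots are fully loaded; it does not rule out that one of $j$'s kept $B$-slots in $(\dmax,t)$ is itself the first free slot, so the claimed conclusion $t^*\ge t$ does not actually follow. This is circular precisely where the lemma needs to do work. Your point~(ii) is likewise not routine: the chain-tracing sketch asserts that ``the first moment any slot $>\dmax$ is full all of its occupants are kept jobs,'' which need not hold once replacement slots accumulate, and the sketch is too underdeveloped to check. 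Point~(i) is a red herring: the lemma's only downstream use (in bounding the number of free slots by $m\dmax$) needs only ``no replacement slot is after the first free slot,'' so upgrading $t^*\ge t$ to $t^*>t$ is unnecessary effort.

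The device that actually closes the argument, and that is absent from your proposal, is the paper's forward induction on arrival order relative to the first free slot $t$. For a job arriving before $t$, the permanently available slot $t$ caps where its replacement scan can go. For a job $j'$ arriving at or after $t$, one shows inductively that it gets \emph{no} replacement slots at all: every slot $s>t$ filled so far in $C$ is also occupied in $B$ by the same earlier-arriving job (since no replacements lie past $t$ so far), so $C$'s partial load at $s$ is at most $B$'s load at $s$ from earlier jobs, which is at most $m-1$ since $B$ also fits $j'$ there; hence all of $j'$'s $B$-slots are available. This global induction is what lets the paper avoid local reasoning about individual replacement chains and about the window $[\dmax+1, a_j-1]$, which is exactly where your attempt gets stuck.
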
 
\begin{proof} 
	Let $t$ be the first free slot. 
	Consider a job that arrives before $t$. 
	This job is not assigned any replacement slots after $t$ since $t$ is free and hence available, and replacement slots are assigned in chronological order. 
	We will argue that jobs arriving after $t$ have no replacement slots, i.e., they get the same slots as in $B$. 
	This is by induction on the arrival order of these jobs. 
	Consider the very first such job. 
	All earlier jobs arrive before $t$ by definition, and have no replacement slots after $t$ as already argued, 
	therefore all of the slots assigned to this job in $B$'s schedule are available. This is the base case. 
	The argument for the inductive case is almost exactly the same. 	
\end{proof} 

\begin{lemma} \label{lem:nooffreeslots} 
	The total number of free slots is at most $m\dmax$. 
	In particular, if $t$ is the earliest time of a free slot, then all the free slots are in the interval $[t,t+\dmax]$.
		In other words, mechanisms $B$ and $C$ get synchronized after time $t + \dmax$. 
\end{lemma}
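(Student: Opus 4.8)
The plan is to exploit Lemma~\ref{lem:freeslot} together with the clairvoyant deadline bound $d_j \le a_j + \dmax$. Write $t$ for the earliest free slot; note $t > \dmax$. By Lemma~\ref{lem:freeslot} no replacement slot is created at any time $\ge t$, and slot availability is monotone, so for any slot $s \ge t$ that is not full at the end of $C$'s run, $s$ was available throughout. Hence every job $j$ with $s \in \tau_j^B$ that $C$ accepts is scheduled by $C$ at $s$, and conversely every job $C$ runs at $s$ is such a job. I would use this to conclude that the deficit of $C$ relative to $B$ at a slot $s \ge t$ equals the number of jobs that $B$ accepts and runs at $s$ but $C$ rejects in step~(3)(b); in particular every free slot is a $B$-slot of some job rejected by $C$.

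Next I would show that $C$ never rejects a step-(3) job that arrives at time $a_j \ge t$, provided $B$ accepts it --- and in fact schedules it exactly as $B$ does. Since $t > \dmax$, the interval $[a_j, d_j]$ avoids $[1,\dmax]$, so none of $j$'s prospective slots were blocked in step~(2). For any $\sigma \in \tau_j^B$ we have $\sigma \ge a_j \ge t$, so by the no-replacements-past-$t$ property the only jobs $C$ has already placed at $\sigma$ are earlier-arriving jobs that also have $\sigma$ in their $B$-slot set; since $B$'s schedule is feasible and $j$ itself is one such job, there are at most $m-1$ of them, so $\sigma$ is still available when $C$ reaches $j$. Thus all of $\tau_j^B$ is available to $j$, $C$ does not reject it in step~(3)(b), and $C$ assigns it exactly $\tau_j^B$. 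Therefore only jobs with $a_j < t$ can be rejected by $C$ and hence create free slots.

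Finally I would assemble the bound. A free slot $s$ is a $B$-slot of some rejected job $j$ with $a_j < t$; since $d_j \le a_j + \dmax$, all of $j$'s slots lie in $[a_j, d_j] \subseteq [1, t+\dmax-1]$, so $s \le t + \dmax - 1$, while $s \ge t$ by minimality of $t$. Hence all free slots lie in $[t, t+\dmax]$, the deficit at each slot is at most $m$, and the interval contains at most $\dmax$ slots, giving at most $m\dmax$ free slots in total. Moreover every job arriving before $t$ has all its slots at times $\le d_j < t+\dmax$, and every job arriving at time $\ge t$ is scheduled by $C$ exactly as in $B$, so the schedules of $B$ and $C$ agree at every time $> t + \dmax$ --- the claimed synchronization.

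The step I expect to be the main obstacle is the second one: showing that $C$ cannot reject a job arriving at or after the first free slot. This is where Lemma~\ref{lem:freeslot} (no replacement slots past $t$), feasibility of $B$'s schedule (at most $m$ jobs per slot), and monotonicity of availability all have to be combined; once that is in place, the first and third steps are essentially bookkeeping with the deadline bound.
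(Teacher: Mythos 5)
Your proof is correct and takes essentially the same route as the paper: appeal to Lemma~\ref{lem:freeslot} to kill replacement slots past $t$, observe that jobs arriving at or after $t$ are scheduled by $C$ exactly as by $B$, and then use $d_j\le a_j+\dmax$ to pin every free slot into $[t,t+\dmax]$. The one place you go further than the paper is the step you flagged as the ``main obstacle'': the paper's proof of Lemma~\ref{lem:nooffreeslots} simply \emph{asserts} that jobs arriving at $t$ or later get the same slots in $C$ as in $B$, relying implicitly on the inductive argument inside the proof of Lemma~\ref{lem:freeslot}; you instead re-derive this directly from the \emph{statement} of Lemma~\ref{lem:freeslot} plus feasibility of $B$'s schedule (at most $m$ jobs per slot, one of which is $j$ itself, so at most $m-1$ competitors occupy $\sigma$, leaving it available). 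Both arguments are sound; yours is a bit more self-contained since it does not reopen the induction, which is a modest improvement in exposition but not a different proof strategy.
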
 
\begin{proof} 
	From Lemma \ref{lem:freeslot}, there are no replacement slots after $t$. 
	Any job that arrives at $t$ or later and is scheduled in $B$ gets the same slots in $C$ as in $B$, and hence there are no free slots corresponding to such a job. 
	All the free slots must correspond to jobs that arrive before $t$, and are therefore 
	in the interval $[t,t+\dmax]$. 
\end{proof}

We are now ready to prove the main lemma of this section. 
\begin{proof}[Proof of Lemma \ref{lem:truthfulswitching}]
	It is easy to see that mechanism $C$ is also \orderresp. 
	Since we already showed that the mechanism is truthful in Lemma \ref{lem:truthfulness}, we only need to argue about the welfare. 
	
	Any job that arrives before time 0 and is accepted by $A$ is also accepted by $C$ and completed, 
	therefore it gets the same welfare as $A$ upto time 0. 
	Now we argue about the total loss in welfare during the time $t \geq 1$. 
	Let $\ell_B(t)$ (resp. $\ell_C(t)$) be the number of jobs scehduled at time $t \geq 1$  by mechanism $B$ (resp. mechanism $C)$, 
	and let $\ell_F(t)$ be the number of free slots at time $t$. 
	By the definition of a free slot, we have that 
	\[ \sum_{t \geq 1} \ell_B(t) \leq \sum_{t \geq \dmax +1}\left(\ell_C (t) + \ell_F(t)\right)   +  m\dmax. \]
	The set of jobs accepted by $C$ is a subset of the set of jobs accepted by $B$, due to steps (2) and (3a). 
	The total length of all jobs that are accepted by $B$ but not by $C$
	is equal to 
	$\sum_{t \geq 1} \left(\ell_B(t) - \ell_C(t)\right)
	\leq \sum_{t \geq \dmax +1} \ell_F(t) + m\dmax \leq 2m\dmax$, 
	where the last inequality is from Lemma \ref{lem:nooffreeslots}. 
	Thus the total loss is at most $2m\vmax\dmax$. 
\end{proof}	

\subsection{Reduction to experts with switching costs }
\label{sec:red-exp}
Let $\switcher(C)$ denote an online algorithm for the problem of learning from expert advice with switching cost $C$ that achieves the regret guarantee of Theorem~\ref{thm:regret-switching}. $\switcher$ is given an instance with $n$ experts, indexed by $i\in [n]$. It specifies for every time step $t\in [T]$ a random expert $i_t$, and then receives a reward vector $\{\rit{i}{t}\}$. Our online scheduling algorithm, that we call \emph{Follow-The-Switcher} or \fts, simulates $\switcher$ in a black-box fashion and follows its advice on which expert, a.k.a. algorithm, to run at every time step.


\begin{definition}
\label{def:FTS}
Given the $n$ online scheduling algorithms, $\Algi{1},\ldots,\Algi{n}$, the Follow-The-Switcher, a.k.a. \fts, algorithm simulates the online algorithm $\switcher(C)$ with $C$ set to $2\vmax\dmax m$. It then proceeds as follows.

\vspace{0.2in}
\hrule
\vspace{0.05in}
\item [] At each time $t\in[T]$:
\begin{enumerate}[itemsep=4pt,topsep=2pt,parsep=0pt,partopsep=0pt,leftmargin=25pt]
\item Simulate algorithms $\Algi{1},\ldots,\Algi{n}$ on the freshly arrived set of jobs.
\item Query $\switcher$ to obtain the index $i_t\in [n]$.
\item If $i_t \neq i_{t-1}$, then switch from $\Algi {i_{t-1}} $ to $\Algi {i_t}$ as described in Section~\ref{sec:truth}. Otherwise continue running the same algorithm $\Algi {i_{t-1}} = \Algi {i_t}$. 
\item Set $\rit{i}{t}\leftarrow\W{t}{\Algi{i_t}}$ for all $i\in[n]$. Send the reward vector $\{\rit{i}{t}\}$ to $\switcher$.
\end{enumerate}
\hrule
\end{definition}

The following theorem now immediately follows from Theorem~\ref{thm:manyswitches}. 
\begin{theorem} 
	\label{lem:fts-regret}
	Let $C=2\vmax\dmax m$. Then the Follow-The-Switcher (\fts) algorithm, described in Definition~\ref{def:FTS}, admits the following regret-bound:
	\begin{align*}
		\reg(\fts)\leq \reg(\switcher(C))\leq O\left(m\vmax\sqrt{T\dmax\log n}\right)~.
	\end{align*}
\end{theorem}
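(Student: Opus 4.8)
The plan is to read $\reg(\fts)$ directly off the guarantee of the online learner $\switcher(C)$ that \fts\ simulates: on every fixed instance $\jobs$ I will show that the welfare of \fts\ is at least the switching-cost-penalized payoff of $\switcher(C)$ on the induced experts instance, and then substitute the parameters.

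\textbf{Setting up the reduction.} Fix an instance $\jobs$. Running \fts\ produces a (random) sequence of experts $i_1,\dots,i_T\in[n]$, and by step~(4) of Definition~\ref{def:FTS} it feeds $\switcher(C)$ the reward vectors with $\rit{i}{t}=\W{t}{\Algi{i}}$, where each $\Algi{i}$ is the clean simulation of the $i$-th scheduling algorithm on all of $\jobs$ from step~(1). Since at most $m$ jobs occupy any slot and each has value at most $\vmax$, we have $0\le\rit{i}{t}=\sum_{j\in J(\Algi{i}):\,t\in\tau_j}v_j\le m\vmax$, so in Theorem~\ref{thm:regret-switching} we may take $R=m\vmax$. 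These rewards depend only on $\jobs$ (and, if the $\Algi{i}$ are randomized, on their own coins), hence are independent of $\switcher$'s internal randomness, so the experts instance handed to $\switcher(C)$ is oblivious and Theorem~\ref{thm:regret-switching} applies. Finally, since $\opt(\jobs)=\max_{i\in[n]}\W{}{\Algi{i},\jobs}=\max_{i\in[n]}\sum_{t=1}^{T}\rit{i}{t}$, the benchmark $\opt(\jobs)$ is precisely the best-expert-in-hindsight term of $\switcher$'s regret.

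\textbf{Accounting the welfare of \fts.} Let $t_1<\dots<t_L$ be the times where $i_t\ne i_{t-1}$, so $L=\bigl|\{t\in[T]:i_t\ne i_{t-1}\}\bigr|$; the sequence $(i_t)_t$ is constant on each resulting block, and \fts\ executes exactly the iterated switching protocol of Theorem~\ref{thm:manyswitches}, with $A_i$ the algorithm run on the $i$-th block and the $t_i$ the block boundaries. Hence \fts\ is \orderresp\ truthful, and Theorem~\ref{thm:manyswitches} with $C=2\vmax\dmax m$ yields
\[
\W{}{\fts,\jobs}\ \ge\ \sum_{i=0}^{L}\sum_{t\in(t_i,t_{i+1}]}\W{t}{A_i}\ -\ 2L\vmax\dmax m\ =\ \sum_{t=1}^{T}\rit{i_t}{t}\ -\ C\,L ,
\]
the equality using $\W{t}{A_i}=\rit{i_t}{t}$ on block $i$ together with the fact that the blocks partition $[1,T]$. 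Taking expectation over $\switcher$'s coins and subtracting from $\opt(\jobs)$,
\[
\opt(\jobs)-\expect{\W{}{\fts,\jobs}}\ \le\ \Bigl(\max_{i\in[n]}\sum_{t=1}^{T}\rit{i}{t}\Bigr)-\expect{\sum_{t=1}^{T}\rit{i_t}{t}}+C\,\expect{L}\ \le\ \reg(\switcher(C)) .
\]
The right-hand side is independent of $\jobs$, so maximizing over $\jobs$ gives $\reg(\fts)\le\reg(\switcher(C))$. Plugging $R=m\vmax$ and $C=2\vmax\dmax m$ into Theorem~\ref{thm:regret-switching}, $R(R+C)=m^2\vmax^2(1+2\dmax)=O(m^2\vmax^2\dmax)$, hence $\reg(\switcher(C))\le O\bigl(\sqrt{R(R+C)T\log n}\bigr)=O\bigl(m\vmax\sqrt{T\dmax\log n}\bigr)$, which is the claimed bound.

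\textbf{Main obstacle.} The argument is essentially bookkeeping; the one delicate point is the legitimacy of the displayed welfare identity. One must be sure that the quantities $\W{t}{A_i}$ appearing in Theorem~\ref{thm:manyswitches} are exactly the \emph{clean} per-slot welfares reported as rewards to $\switcher$ (and not \fts's realized, possibly zero, welfare during the switching delays), that the per-switch welfare loss $2\vmax\dmax m$ coincides with the switching cost $C$ hard-wired into $\switcher(C)$, and that the reward sequence is independent of $\switcher$'s coin flips so that the worst-case experts-with-switching-cost bound of Theorem~\ref{thm:regret-switching} is applicable. If the $\Algi{i}$ are themselves randomized, one conditions on their randomness, runs the above pointwise, and then takes the outer expectation, which leaves the bound intact.
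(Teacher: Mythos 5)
Your proof is correct and unfolds exactly the argument the paper leaves implicit when it says the theorem ``immediately follows from Theorem~\ref{thm:manyswitches}'': identify $R=m\vmax$ and $C=2\vmax\dmax m$, note that the reward vectors $\rit{i}{t}=\W{t}{\Algi{i}}$ come from clean simulations of the $\Algi{i}$ on $\jobs$ (hence are an oblivious sequence), apply Theorem~\ref{thm:manyswitches} pathwise to bound FTS's welfare by $\sum_t\rit{i_t}{t}-CL$, take expectations, and plug into Theorem~\ref{thm:regret-switching}. The observations you flag as the ``main obstacle'' — that the $W_t(A_i)$ in Theorem~\ref{thm:manyswitches} are the clean per-slot welfares rather than FTS's realized welfare, and that the per-switch welfare loss equals the switching cost $C$ fed to $\switcher$ — are indeed precisely what makes the reduction tight, and your handling of them matches the paper's intent.
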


\section{Non-clairvoyant setting}
\label{sec:non-clair}
In this section, we look at regret minimization in the non-clairvoyant setting; 
we recall the main differences here. 
 Every job reports all parts of its type except its length, and the algorithm only observes 
 the length of a job when (and if) it is completed. 
Hence a non-clairvoyant algorithm cannot plan for a complete schedule ahead of time. 
The algorithm maintains a queue of unfinished jobs and 
at every time $t$ decides which job to schedule from this queue at that time.
The deadline for a job is now the number of time slots that the job is willing to wait out. 
If a job passes its deadline, which means that the number of time slots that a job $j$ waits since its arrival exceeds a threshold $d_j$, then the job is deleted from the queue. 
Due to this reason, the notion of promptness is not quite applicable to the non-clairvoyant setting. 
In its place, we use the  \orderresp property, which states that only jobs arriving earlier can influence the allocation and payments for a given job.
We assume that there is a total order on the arrival time of the jobs, by breaking ties arbitrarily in case multiple jobs arrive at the same time.

Similar to the clairvoyant setting, 
we have a set of $n$ online scheduling algorithms, $\Algi{1},\ldots,\Algi{n}$, and we aim to design an online algorithm that minimizes the regret relative to the best of these algorithms in hindsight. 
We begin our discussion in Section~\ref{sec:random-restart}, 
where we show that this benchmark by itself is impossible to compete with,  
which motivates a reasonable modification. 
In Section~\ref{sec:nc-truth} we show how to switch between two mechanisms truthfully. 
Finally in Section~\ref{sec:reduction-to-bandits}, we show how to use multi-armed bandit algorithms to get tight regret bounds. 
\subsection{Scheduling algorithms with random restarts}
\label{sec:random-restart}
\paragraph{Robustness of the benchmark.} In the non-clairvoyant setting, it is easy to come up with examples showing that the welfare obtained by an online scheduling mechanism is very sensitive to \emph{timing} in the adversarial model of jobs, i.e., by slightly changing the starting time of the mechanism the obtained welfare can be drastically different.  This has been demonstrated in Example~\ref{example:syncing}. 
\begin{example} 
	\label{example:syncing}
	Consider running FIFO scheduling with pricing admission control at $p=1$. Suppose we have three jobs $J_1, J_2,J_3$, with $v_1=v_2=v_3=1$. Suppose $(a_1,l_1)=(1,3)$, $(a_2,l_2)=(3,3)$ and $(a_3,l_3)=(4,T-4)$. (All of them have immediate deadlines, which means they need to be scheduled when they arrive.) Normally, we schedule jobs $J_1$ and $J_3$ and generate a welfare equal to $T$. Now consider  starting at time $t=2$. Then we only schedule job $J_2$ (and in the non-clairvoyant setting we will not even notice how valuable job $J_3$ was!) and get only welfare equal to $3$.
\end{example}

As it is clear from this example, the welfare obtained form such a mechanism cannot be a reasonable benchmark for our regret minimization, as one has to think very carefully about when to start running such a mechanism to calculate the benchmark. 
Otherwise, the benchmark mechanism could easily get tricked into a false start. In other words, we need to define the benchmark in a way that is \emph{robust} to this sort of timing issues, independent of the choice of scheduling mechanism defining the benchmark.

\paragraph{Syncing issues.} In the non-clairvoyant setting, our scheduling mechanism is not able to \emph{simulate} an arbitrary candidate scheduling mechanism $\Algi{i}$ starting from an arbitrary time, 
since there is no way of knowing its state. 
Accordingly, following the decisions of a bandit algorithm, similar to what we did in Section~\ref{sec:red-exp},  is generally not possible in the non-clairvoyant setting. 
  
  However, if both our mechanism and the new switched scheduling mechanism restart from a fresh state at exactly the same time after switching (e.g. slightly after the switching time when our mechanism is done with its current jobs) then our mechanism can sync with the new scheduling mechanism.


To address above concerns and to be able to design a truthful online scheduling mechanism that achieves a meaningful regret bound, we introduce a couple of new ingredients in our model and redefine our benchmark. We start by defining the notion of a \emph{random restart} formally as following.
An important property of the way we restart is that it preserves truthfulness: a mechanism that was truthful to begin with is still truthful with a restart. 
\begin{definition} 
\label{def:random-restart}
Given an online scheduling mechanism $A$, we define the \emph{restart} at time $t$ as follows.   
\begin{itemize}
\item During $[t:t+(\lmax+\dmax)]$, mechanism $\mathcal{M}$ continues working on the jobs that have arrived before $t$.
\item If a job $j$ arrives during $[t:t+(\lmax+\dmax)]$, modify it as follows. 
\begin{itemize}
\item Shift its arrival  time to the end of this interval, i.e. $a_j\leftarrow t+(\lmax+\dmax)+1$.
\item Adjust the deadline of the job so that it reflects the time lost during the interval $[t,t+\dlmax]$. 
This might mean some jobs are past their deadline. These jobs are rejected. 
\item Preserve the arrival order. Use the tie breaking rule to make sure the arrival order of jobs whose starting time was set to $t+\dlmax$ is the same as in the original instance. 
\end{itemize}
\end{itemize}
\end{definition}
Having the formal definition of a restart, we ask the following question: how can one define a robust benchmark in the non-clairvoyant scheduling problem, given a set of candidate scheduling mechanisms? Here is an adaptation of our previous benchmark, i.e. welfare of the best-in-hindsight scheduling mechanism, for the non-clairvoyant setting.

\begin{definition}
\label{def:robust-bench} Given candidate scheduling mechanisms $\Algi{1},\ldots,\Algi{n}$,
and a parameter $\gamma\in[0,1]$,  the \emph{random-restarting} mechanisms $\xoverline{\Algi{1}},\ldots,\xoverline{\Algi{n}}$ are defined to be the original candidate mechanisms accompanied by independent random restarts of probability $\gamma$  at every time $t\in[T]$. We define the \emph{random-restarting benchmark} for an instance $\jobs$ to be $\xoverline{\opt}(\jobs) = \max_{i\in [n]} \expect{\W{}{\xoverline{\Algi{i}},\jobs}}$.
\end{definition}

The benefits of using benchmark $\xoverline{\opt}(\jobs)$ are twofold. First, this benchmark is robust to timing issues, because the scheduling mechanism generating the benchmark restarts independently at random at every time $t$ with probability $\gamma$. Therefore, the benchmark loses no more than the generated welfare between two consecutive random-restarts due to timing issues. Second, while this property does not hold in general, for a large family of scheduling mechanisms (such as posted-pricing-FIFO)  and under the \emph{stochastic model} of jobs (e.g. see~\cite{CDH+17}), the welfare loss due to independent (but infrequent) random restarts will easily be bounded. This property of the pair (stochastic model, online scheduling mechanisms), which we call \emph{robustness-to-welfare-loss}, is formalized as following. 

\begin{definition}
\label{def:robust-to-restart}
 Given a distribution over jobs $\mathcal{D}$, the random-restarting benchmark $\xoverline{\opt}(\jobs)$ with parameter $\gamma$ is robust-to-welfare-loss in expectation over stochastic jobs $\mathcal{D}$ if 
\[
\text{\bf E}_{\jobs\sim\mathcal{D}}\!\left[{\xoverline{\opt}(\jobs)}\right]\geq \text{\bf E}_{\jobs\sim\mathcal{D}}\!\left[{{\opt}(\jobs)} \right] -\gamma\cdot T\cdot\vmax(\lmax+\dmax)
\] 
Moreover, a random-restarting online scheduling mechanism $\xoverline{\Algi{i}}$ with parameter $\gamma$ is robust-to-welfare-loss in expectation under stochastic jobs $\mathcal{D}$ if
\[
\text{\bf E}_{\jobs\sim\mathcal{D}}\!\left[\sum_{t\in[T]}{\W{t}{\xoverline{\Algi{i}},\jobs}}\right]\geq \text{\bf E}_{\jobs\sim\mathcal{D}}\!\left[\sum_{t\in[T]}{\W{t}{{\Algi{i}},\jobs}}\right] -\gamma\cdot T\cdot\vmax(\lmax+\dmax)
\]
\end{definition}

Clearly, if all of the random-restarting mechanisms $\Algi{1},\ldots,\Algi{n}$ are robust-to-welfare-loss under stochastic job model $\mathcal{D}$, then  $\xoverline{\opt}(\jobs)$ will also be robust-to-welfare-loss under $\mathcal{D}$.



%

\subsection{Truthful switching in the non-clairvoyant setting}
\label{sec:nc-truth}
We now show how to switch between truthful non-clairvoyant mechanisms. 
We reiterate a subtle aspect of truthfulness in the non-clairvoyant setting: a job that tries to influence the mechanism by switching its position in the arrival order can be treated the same way 
as a job reporting a later arrival time: it cannot be beneficial to do this given that the mechanism is truthful. 
Posted-pricing-FIFO is an example of a mechanism that is both truthful and \orderresp: 
all jobs whose values are less than a threshold price are rejected, and the rest of the jobs are scheduled in arrival order. 

We claim that the random restart algorithm of Definition \ref{def:random-restart} preserves the truthfulness of the underlying scheduling mechanism. The following lemma is proved in Appendix~\ref{sec:app-proofs-ub}.
\begin{lemma}\label{lem:truth-restart}
	{If $A$ is an order respecting truthful mechanism, then $A$ with (arbitrary) restarts is also order respecting and truthful. }
\end{lemma}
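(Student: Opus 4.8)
The goal is to show that applying a restart (Definition~\ref{def:random-restart}) to an \orderresp truthful mechanism $A$ yields a mechanism $\bar A$ that is again \orderresp and truthful. The \orderresp part should be essentially immediate from the construction: the restart modifies a job's arrival time and deadline only as a function of the (fixed) restart time $t$ and the job's own reported arrival time, and then feeds the modified instance to $A$; since the modified arrival order refines the original arrival order (we are told tie-breaking preserves it), and since $A$ is \orderresp, the allocation/payment of a job in $\bar A$ depends only on jobs that arrive no later than it in the original instance. I would state this in one or two sentences, being careful to note that the restart time $t$ is itself not influenced by any job (in the eventual use it will be determined by independent external coins), so conditioning on $t$ is legitimate.

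**The main work: truthfulness.** Fix a job $j$, fix the reports of all other jobs, and fix the restart time $t$ (or, in the random-restart case, fix the entire sequence of restart times — these are independent of $j$'s report, so it suffices to argue truthfulness conditioned on any realization). I would split into cases according to where $j$'s true arrival $a_j$ falls relative to the restart interval $I = [t, t+\dlmax]$, and within each case enumerate the misreports $j$ could make. The key structural fact to extract and use is: the map from $j$'s report $(a_j, d_j, v_j)$ (length is not reported) to the \emph{modified} report that $A$ actually sees is, after conditioning on $t$, a fixed function; call it $\phi_t$. If $\phi_t$ is ``order-and-window monotone'' in the right sense — i.e., reporting a later arrival or a shorter deadline to $\bar A$ translates to a (weakly) later arrival or shorter effective deadline seen by $A$ — then any profitable deviation in $\bar A$ would pull back to a deviation in $A$ that is available to a job with $j$'s true type (recalling the standing assumption that jobs cannot report earlier arrivals, which is preserved by $\phi_t$ since shifting arrivals forward to $t+\dlmax+1$ only moves things later). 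So the plan is: (1) if $a_j > t+\dlmax$, the restart does not touch $j$ and $A$'s truthfulness applies directly (one must check that $j$ cannot gain by misreporting an earlier arrival \emph{into} the interval $I$ — but a job arriving in $I$ is shifted to $t+\dlmax+1 \le a_j$, so this is a weakly-earlier arrival report to $A$, which is disallowed/non-beneficial); (2) if $a_j < t$, the restart again does not change $j$'s arrival, but it does delete $j$ if the time lost in $I$ pushes it past its deadline — here one checks that $j$ cannot avoid deletion or gain a better slot by misreporting, because any deadline extension that saves $j$ in $\bar A$ is a deadline extension in $A$ as well and $A$ is truthful; (3) if $a_j \in I$, then $j$'s arrival is shifted to $t+\dlmax+1$ and its deadline is shrunk by $t+\dlmax+1-a_j$ — and the point is that every report $j$ could make that keeps it ``caught in the interval'' produces the same shifted arrival but a correspondingly adjusted deadline, so the outcome is exactly that of the honest shifted job run through $A$, which $A$ treats truthfully; reporting a later arrival (out of $I$) is handled as in case (1) and is non-beneficial.

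**The anticipated obstacle.** The delicate point — the same one flagged in the proof of Lemma~\ref{lem:truthfulness} — is the interaction between a deadline misreport and the \emph{non-clairvoyant} deadline semantics: here $d_j$ is the maximum wait before \emph{starting}, not finishing, and there is no preemption, so once $A$ starts a job it runs to completion. I expect the crux to be ruling out the following: $j$ misreports a longer deadline so that $\bar A$'s internal copy of $A$ accepts and starts it, and — because of how the restart shifted timelines — it actually gets started within $j$'s \emph{true} wait budget while paying the (lower) price associated with the longer-deadline report. The resolution should parallel Lemma~\ref{lem:replacement}/\ref{lem:truthfulswitching}: argue that the restart only ever \emph{delays} when a job can be started (it freezes the machine on old jobs during $I$ and pushes new arrivals to the end of $I$), so a job that would be started after its true deadline in plain $A$ is still started no earlier under $\bar A$; hence a deadline-lie that is non-beneficial in $A$ cannot become beneficial in $\bar A$. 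I would make this ``restarts only delay starts'' claim precise as a small monotonicity lemma and then feed it into case (2) and (3) above; getting that lemma stated cleanly, accounting for the deletion-on-expiry rule, is where the real care is needed. Everything else is bookkeeping over the case split.
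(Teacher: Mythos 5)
Your plan matches the paper's proof in essence: condition on the (externally determined) restart time, case-split on where the job's true arrival falls relative to the restart interval, and reduce each case to the truthfulness of $A$ applied to the time-shifted instance, using the fact that jobs caught in the interval are forced to $\dlmax+1$ and hence any job with a tight deadline cannot benefit from any report.

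Two small corrections. First, your case $a_j < t$ slightly misreads Definition~\ref{def:random-restart}: jobs that arrive before the restart time are never modified or deleted by the restart — the mechanism simply continues the already-committed plan for them, and all of them complete within the restart window since their latest start time is at most $t+\dmax$. Deletion only applies to jobs that arrive inside the window and whose \emph{adjusted} deadlines fall short. So that case is handled by $A$'s truthfulness on the unmodified instance (plus the observation that reporting an arrival $\ge t$ only pushes the job past its true deadline), with no deletion subtlety. Second, the ``anticipated obstacle'' you flag — a deadline lie that becomes profitable in the restarted mechanism because the schedule is shifted — does not actually arise here, and you do not need the monotonicity lemma you propose. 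The reason Lemma~\ref{lem:replacement} was needed in the switching construction is that $C$'s schedule could differ from $B$'s (replacement slots). Here there is no such discrepancy: on the post-restart portion, the restarted mechanism \emph{is} $A$ run on the shifted instance, and the shift $\phi_t$ preserves each job's absolute latest-start time (it shifts arrival forward and shrinks the wait budget by exactly the same amount). Hence $u_{\theta_j}(\bar A(\hat\theta_j)) = u_{\phi_t(\theta_j)}(A(\phi_t(\hat\theta_j)))$ identically, and $A$'s truthfulness transfers verbatim once you note that $\phi_t$ is monotone in reported arrival (so ``no earlier arrival'' is preserved). The paper's terse ``Truthfulness of $A$ for this instance guarantees truthfulness for such jobs'' is really doing exactly this, with no residual scheduling case to rule out.
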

	

Combining ideas from the truthful switching algorithm in the clairvoyant setting and the truthful random restart algorithm, we develop a truthful switching algorithm for non-clairvoyant settings that switches from a mechanism $A$ to a mechanism $B$ at time $0$.
\begin{definition}
\label{def:nclr-truthful-switcher}
The mechanism  $C$ is as follows. 
\begin{enumerate}
	\item For jobs that arrive by time $0$, mimic mechanism  $A$ and return the same allocation, schedule  and prices as $A$. 
	All these jobs are completed by time $\dlmax$. 
	\item For all remaining jobs, i.e., jobs $j$ with $a_j\geq 1$ run mechanism $B$ on these with the following modifications. 
	\begin{enumerate}            
		\item If the arrival time of a job is $< \dlmax$, set its arrival time to $\dlmax+1$. 
		\item Adjust the deadline of the job so that it reflects the time lost during the interval $[1,\dlmax]$. 
		This might mean some jobs are past their deadline. These jobs are rejected. 
		\item Preserve the arrival order. Use the tie breaking rule to make sure the arrival order of jobs whose starting time was set to $\dlmax$ is the same as in the original instance. 
	\end{enumerate}
\end{enumerate}
\end{definition}

\begin{lemma}\label{lem:state-match}
	The state of the algorithm $C$ at time $\dlmax$ is the same as the state of the algorithm $B$ at time $\dlmax$, given that $B$ is restarted during the interval $[1,\dlmax]$.
\end{lemma}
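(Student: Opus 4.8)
The plan is to show that each of the two algorithms passes through a ``transition phase'' occupying the window $[1,\dlmax]$, at the end of which it has an empty queue of unfinished jobs, the internal state of $B$ reset to its initial value, and an identical stream of modified jobs still to arrive; two algorithms agreeing on all of this are, by definition, in the same state. Throughout I take the ``state at time $\dlmax$'' to mean the contents of the queue of admitted-but-unfinished jobs together with the progress made on each, plus whatever internal bookkeeping of $B$ governs its future allocation and payment decisions.

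First I would analyze $C$. Every job handled in Step~1 of Definition~\ref{def:nclr-truthful-switcher} arrives by time $0$, and since in the non-clairvoyant model the deadline bounds the \emph{start} time of a job, $d_j\le a_j+\dmax\le\dmax$, such a job is either rejected or started by time $\dmax$ and hence, as $l_j\le\lmax$ and the mechanism is non-preemptive, completed by $\dmax+\lmax=\dlmax$. So at time $\dlmax$ mechanism $C$ holds no unfinished Step-1 job; and by Step~2 it has admitted no Step-2 job yet, each of those having been shifted to an arrival time at least $\dlmax+1$. Thus at time $\dlmax$, $C$ equals $B$ in its initial state, poised to be fed the Step-2 stream: the jobs with $a_j\ge1$, with those arriving in the transition window shifted to just past it, deadlines shrunk to reflect the time lost, and relative arrival order preserved by the fixed tie-breaking rule.

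Next I would run the same argument for $B$ restarted in $[1,\dlmax]$. By Definition~\ref{def:random-restart} the restart resets $B$'s internal state at the start of the window while $B$ only finishes its previously committed jobs during the window; those arrived before the restart, so their start-time deadlines are at most $\dmax$ and, as before, they are finished by $\dlmax$. Jobs arriving inside the window are deferred past it with shifted arrival, deadline shrunk by the time lost, and arrival order preserved. So at time $\dlmax$, $B$-with-restart is likewise $B$ in its initial state, poised to be fed a modified stream. The crux is then to observe that this stream is literally the same instance as $C$'s Step-2 stream, because the restart transformation of Definition~\ref{def:random-restart} and the Step-2 transformation of Definition~\ref{def:nclr-truthful-switcher} apply the same arrival-shift, the same deadline adjustment, and the same order-preservation rule; and that the only apparent discrepancy---that in $C$ the pre-window jobs were served by $A$, not $B$---is irrelevant, since the restart erases $B$'s memory of every pre-restart job, so in both algorithms the post-window evolution is just $B$ run from empty on the common stream. (Where needed this uses that $B$ is \orderresp: the allocation and payment of a post-window job depend only on jobs arriving no later, all of which lie in the common stream.) Combining these observations gives that the two states coincide.

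The step I expect to be the main obstacle is not conceptual but the bookkeeping of aligning the endpoints of the transition windows in the two definitions---the $\dlmax$ versus $\dlmax+1$ off-by-ones and the treatment of a job arriving exactly at the window boundary---and checking that ``the deadline adjusted so that it reflects the time lost'' is computed identically in both processes, so that any job rejected for being past its shrunk deadline in one process is also rejected in the other. Once the windows and this deadline arithmetic are matched up, the identification of the two modified instances, and hence of the two states, is immediate.
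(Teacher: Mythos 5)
Your proposal is correct and takes essentially the same approach as the paper's two-sentence proof, which rests on the same two observations: Step~2 of $C$ replicates the restart transformation on the incoming job stream, and jobs arriving by time~$0$ are finished by $\dlmax$ and, by the \orderresp{} property, have no bearing on $B$'s state thereafter. The off-by-one window-alignment issues you flag at the end are genuine imprecisions in Definitions~\ref{def:random-restart} and~\ref{def:nclr-truthful-switcher}, but the paper's own proof leaves them unaddressed as well.
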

\begin{proof} 
	This follows by observing that Step (2) of mechanism 	$C$ is identical to the modifications made during a restart. 
	Any job that arrives by time 0 does not influence the state of mechanism $B$ at time $\dlmax$ in either case. 
\end{proof}

We obtain the following theorem (see Appendix~\ref{sec:app-proofs-ub} for a proof).
\begin{theorem} \label{thm:truthfulswitching}
	Given \orderresp truthful mechanisms $A$ and $B$ in the non-clairvoyant setting, 
	switching mechanism $C$ in Definition~\ref{def:nclr-truthful-switcher} is \orderresp, truthful, and obtains welfare at least 
	$$ \textstyle \sum_{t\le \dlmax} W_t(A) + \sum_{t\geq1 + \dlmax} W_t(B) ,$$ 
	given that $A$ and $B$ are restarted at time 1. 
\end{theorem}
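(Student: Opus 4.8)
The plan is to prove the three assertions of Theorem~\ref{thm:truthfulswitching}---that $C$ is \orderresp, that $C$ is truthful, and the welfare bound---one at a time, leaning on the two facts already established: Lemma~\ref{lem:state-match}, which says that on the jobs arriving after time $0$ mechanism $C$ evolves exactly as $B$ does under a restart during $[1,\dlmax]$; and Lemma~\ref{lem:truth-restart}, which says a restart preserves the \orderresp property and truthfulness. The one conceptual remark is that, unlike the clairvoyant switch of Lemma~\ref{lem:truthfulswitching}, no ``best effort'' re-packing of slots is needed here: since the benchmark copy of $B$ is itself restarted at time $1$, both it and $C$ resume from an empty state at time $\dlmax$, so $C$ can simulate $B$ verbatim from then on. This is exactly why the welfare bound carries no additive $2\vmax\dmax m$ loss term---that loss has been absorbed into the (restarted) benchmark.

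Showing that $C$ is \orderresp, and the welfare bound, are the short parts. A job with $a_j\le 0$ receives in $C$ precisely the allocation, schedule and price it receives in $A$, which is \orderresp; a job with $a_j\ge 1$ receives whatever $B$ assigns it after the arrival-shift and deadline-adjustment of Step~(2), and since Step~(2c) preserves the arrival order, that outcome depends only on jobs arriving before $j$. Hence $C$ is \orderresp. For welfare, I would split $C$'s timeline at $\dlmax$. On $[1,\dlmax]$, mechanism $C$ is merely finishing the Step~(1) jobs---all of which complete by $\dlmax$---and it does so identically to $A$, while no Step~(2) job runs in this window because its arrival has been pushed past $\dlmax$. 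On $[\dlmax+1,T]$, Lemma~\ref{lem:state-match} gives that $C$ and the restarted copy of $B$ are in the same state and hence generate the same value at every step. Summing the two regimes gives $\sum_{t\le\dlmax}W_t(C)=\sum_{t\le\dlmax}W_t(A)$ and $\sum_{t\ge 1+\dlmax}W_t(C)=\sum_{t\ge 1+\dlmax}W_t(B)$ for $A,B$ restarted at time $1$, which is the claimed bound (in fact with equality).

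The substantive part is truthfulness, which I would handle by a case split on a job $j$'s \emph{true} arrival time $a_j$, with all other reports fixed. If $a_j\ge 1$, then since a job cannot report an earlier arrival, every report of $j$ leaves it in the Step~(2) regime; by Lemma~\ref{lem:state-match} the map from $j$'s report to its outcome in $C$ coincides with the corresponding map for $B$ restarted at time $1$, and the latter is truthful by Lemma~\ref{lem:truth-restart}, so $j$ gains nothing---this also subsumes attempts to shift $j$'s position in the arrival order, which behave like reporting a later arrival. If $a_j\le 0$, then any report that still has reported arrival $\le 0$ keeps $j$ in Step~(1), where $j$ is handled exactly as by $A$ (a restart at time $1$ does not touch jobs arriving before time $1$), so truthfulness of $A$ applies. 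The only remaining deviation is to report an arrival time $\ge 1$, which moves $j$ to Step~(2); there $j$'s start time in $C$ is at least $\dlmax+1>\dmax$ (its reported arrival is shifted to $\dlmax+1$, or already exceeds $\dlmax$), whereas $j$'s \emph{true} deadline forces it to be started by $a_j+d_j\le\dmax$. So after this deviation $j$ cannot be started within its true window, receives zero value and nonpositive utility, while a truthful report yields nonnegative utility---no gain. This exhausts the cases.

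The step I expect to be the main obstacle is exactly isolating that the restart-style modifications of Step~(2)---shifting arrivals, shrinking deadlines, and rejecting jobs caught in the restart window---do not open up any new manipulation. But that is precisely the content of Lemma~\ref{lem:truth-restart}, so with that lemma in hand the residual difficulty is only the cross-regime deviation analyzed above (true arrival $\le 0$, reported arrival $\ge 1$), together with the bookkeeping of which mechanism ``owns'' which time slot on $[1,\dlmax]$ versus $[\dlmax+1,T]$; neither is hard once the timeline is drawn carefully.
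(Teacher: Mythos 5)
Your proposal is correct and follows essentially the same route as the paper's proof: truthfulness is reduced to the truthfulness of $A$ for jobs arriving by time $0$ and to Lemma~\ref{lem:truth-restart} for later jobs, the cross-regime deviation is killed by noting the shifted start time exceeds the true deadline, and the welfare bound comes from splitting the timeline at $\dlmax$ and invoking Lemma~\ref{lem:state-match}. Your write-up is, if anything, slightly more explicit than the paper's (e.g.\ in observing that the two welfare sums hold with equality against the restarted benchmarks).
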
 

\subsection{Reduction to multi-armed bandits}
\label{sec:reduction-to-bandits}
In this section, we show how to design a truthful online learning algorithm that minimizes the regret relative to the random-restarting benchmark, i.e.  best-in-hindsight of random-restarting truthful mechanisms $\xoverline{\Algi{1}},\ldots,\xoverline{\Algi{n}}$. Similar to Section~\ref{sec:red-exp}, we consider a relevant adversarial Multi-Armed Bandits problem or MAB, where we have an arm for each of the $n$ online scheduling mechanisms and we have bandit feedback, meaning that any algorithm only observes the reward of the arm it plays and not the other arms. 

In such a setting, we assume having query access to a MAB algorithm $\switcherbandit$ that admits the optimal regret bound $O(R\sqrt{T n \log n})$ in Theorem~\ref{thm:exp3}, where $n$ is the number of arms, $T$ is the time horizon and $R$ is an upper-bound on the reward of an arm. Moreover, for the sake of simplicity, we assume $\switcherbandit$ does not need to know the time horizon $T$ or rewards range $R$ in advance, and it only needs to know these quantities are bounded. This assumption can be removed by using a \emph{doubling trick}: given black-box access to a bandit algorithm $\switcherbandit_1(R,T)$ that requires knowing $R$ and $T$, one can come up with another black-box algorithm $\switcherbandit_2$ with the same asymptotic regret bound that does not need this information. This reduction is explained in Appendix~\ref{appendix:doubling-trick}.

 Our proposed algorithm, which we call \emph{Follow-The-Bandit-Switcher} or \emph{FTBS},  uses $\switcherbandit$ in a black-box fashion when looped in with the right rewards. It then follows $\switcherbandit$'s advice  by truthful switching between arms, as suggested by Theorem~\ref{thm:truthfulswitching}. This helps the FTBS to aggregate truth and welfare guarantees of mechanisms $\xoverline{\Algi{1}},\ldots,\xoverline{\Algi{n}}$ even in the non-clairvoyant setting. We formally define the FTBS as following.

\begin{definition}
\label{def:FTBS}
Given the $n$ online scheduling mechanisms, $\Algi{1},\ldots,\Algi{n}$,  query access to online bandit algorithm $\switcherbandit$, and parameter $\gamma\in[0,1]$,  the Follow-The-Bandit-Switcher mechanism proceeds as follows.

\vspace{0.2in}
\hrule
\vspace{0.05in}
 \item[] Initialize $\hat{t}=1$. \texttt{\footnotesize{(This index is counter for number of \heads)}}
 \item[]Query $\switcherbandit$ for the initial arm $i_1$, and run algorithm $\Algi{i_{1}}$.
 
  \item[] At each time $t\in[T]$:
\begin{enumerate}[itemsep=4pt,topsep=2pt,parsep=0pt,partopsep=0pt,leftmargin=25pt]

	\item  Flip an independent coin $\coin{t}$ with $\textrm{Pr}(\coin{t} = \heads) = \gamma$. 
	\item If coin $\coin{t}$ shows a $\heads$,
	\begin{enumerate}
	\item Let $t'=\max\{t''<t: \coin{t''} = \heads\}$. If no such an integer exists, let $t'=1$.
	\item Update the sequence of bandit rewards between two consecutive $\heads$:
\[
 \rit{i_{\hat{t}}}{x}= 
\begin{cases}
    0,& \text{if }~x\in[t',\min(t'+\lmax+\dmax,t-1)]\\        \W{x}{\ftbs},         & \text{if }~x\in[\min(t'+\lmax+\dmax,t-1)+1, t-1]
\end{cases}
\]	
	    \item Set $\Rit{i_{t'}}{\hat{t}}\leftarrow \sum_{x\in[t',t-1]}\rit{i_{t'}}{x}$ and send this bandit feedback to $\switcherbandit$.
	    \item Set $\hat{t}\leftarrow\hat{t}+1$.
		\item Let the new arm chosen by $\switcherbandit$ be $i_{\hat{t}}\in[n].$
		\item If $i_{\hat{t}} \neq i_{\hat{t}-1}$, switch from $\Algi{i_{\hat{t}-1}}$to ${\Algi{i_{\hat{t}}}}$ using the mechanism in Section~\ref{sec:nc-truth}. 
		\item Otherwise, restart ${\Algi{i_t}}$ at time $t$. 
	\end{enumerate}
	\item If coin $\coin{t}$ shows a $\tails$, continue running ${\Algi{i_{\hat{t}}}}$.
\end{enumerate}

\hrule
\end{definition}

We now state and prove a tight regret bound (up to logarithmic factor) for FTBS.


\begin{theorem} 
\label{thm:regret-nclr}
 The Follow-The-Bandit-Switcher (FTBS) scheduling mechanism, described in Definition~\ref{def:FTBS}, admits the following regret-bound if $\gamma=(\lmax+\dmax)^{-2/3} T^{-1/3}(n \log(n))^{1/3}$:
\[
\reg(\ftbs)\leq O(m\cdot\vmax (\lmax+\dmax)^{1/3} (n\log(n))^{1/3} T^{2/3}\log T)=\tilde{O}(T^{2/3})
\]
where $\reg(\texttt{FTBS})$ is the regret relative to $\xoverline{\opt}$, i.e. the random-restarting benchmark as in Definition~\ref{def:robust-bench}.
\end{theorem}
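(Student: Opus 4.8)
The plan is to view FTBS as running the bandit algorithm $\switcherbandit$ on a "batched" instance, where each batch is the stretch of time between two consecutive $\heads$ outcomes of the coins $\coin{t}$, and then combine the MAB regret bound of Theorem~\ref{thm:exp3} with the per-switch welfare loss of Theorem~\ref{thm:truthfulswitching}. First I would set up notation: let $1 = s_1 < s_2 < \dots < s_K$ be the times at which $\coin{t} = \heads$ (so $K$ is a random variable with $\expect{K} = \gamma T$), and let the $k$-th batch be $[s_k, s_{k+1} - 1]$ with $s_{K+1} = T+1$. The key structural observation, which I would justify using Lemma~\ref{lem:state-match} and Theorem~\ref{thm:truthfulswitching}, is that at every $\heads$ time the mechanism FTBS either restarts the current algorithm or switches to a new one, and in both cases after $\dlmax$ time steps its state coincides with the state of a freshly(-restarted) copy of the chosen algorithm $\xoverline{\Algi{i_{\hat t}}}$. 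Hence within batch $k$, FTBS collects exactly the welfare that $\Algi{i_{s_k}}$ (started fresh at $s_k$) would collect, except for a loss of at most $m\vmax\dlmax$ at the start of the batch for finishing leftover jobs — this is precisely the bound built into the reward update in Definition~\ref{def:FTBS} (the first $\dlmax$ steps of each batch are scored as $0$).

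Next I would argue truthfulness: FTBS is an order-respecting truthful mechanism. The coin flips $\coin{t}$ are independent of job reports and of $\switcherbandit$'s internal randomness (by the doubling-trick-free assumption on $\switcherbandit$ and the fact that rewards are only revealed at $\heads$ times, after the batch has completed), so conditioned on the coin sequence, FTBS is a composition of truthful order-respecting switches and restarts, each of which preserves these properties by Theorem~\ref{thm:truthfulswitching} and Lemma~\ref{lem:truth-restart}. Taking expectations over the coins preserves truthfulness since each job's utility is maximized by truth-telling in every realization.

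For the regret bound itself I would chain three inequalities. (i) The random-restarting benchmark satisfies $\xoverline{\opt}(\jobs) = \max_i \expect{\W{}{\xoverline{\Algi{i}}}}$; since the $k$-th batch of a random-restarting mechanism behaves exactly like a fresh start, $\expect{\W{}{\xoverline{\Algi{i}}}}$ equals the expected total of the per-batch welfares $\Rit{i}{k}$ that $\switcherbandit$ would see, up to the loss already accounted for. (ii) Apply Theorem~\ref{thm:exp3} on the batched instance: the number of rounds seen by $\switcherbandit$ is $K \le T$, and the reward per round is bounded by $R = m\vmax(\lmax + \dmax)$ times the batch length; I would need a concentration / conditioning argument (or truncation at batch length $O(\log T / \gamma)$, whence the extra $\log T$ factor) to control the maximum batch length, giving a per-round reward bound of $\tilde O(m\vmax(\lmax+\dmax)/\gamma)$ and thus a MAB regret of $\tilde O\big(\frac{m\vmax(\lmax+\dmax)}{\gamma}\sqrt{\gamma T \cdot n\log n}\big)$. (iii) Add the switching/restart loss: there are $K \approx \gamma T$ batches, each costing at most $m\vmax\dlmax$ in welfare, for a total of $O(\gamma T \cdot m\vmax(\lmax+\dmax))$. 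Summing (ii) and (iii) and optimizing over $\gamma$ — balancing $\gamma T (\lmax+\dmax)$ against $\sqrt{Tn\log n / \gamma} \cdot (\lmax+\dmax)$ — yields $\gamma = (\lmax+\dmax)^{-2/3} T^{-1/3}(n\log n)^{1/3}$ and the claimed bound $\tilde O\big(m\vmax(\lmax+\dmax)^{1/3}(n\log n)^{1/3}T^{2/3}\log T\big)$.

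The main obstacle I anticipate is step (iii)/(ii)'s interaction: making rigorous that $\switcherbandit$'s regret guarantee, stated for bounded per-round rewards and a fixed horizon, transfers to the batched instance where the number of rounds $K$ and the per-round reward bound are themselves random and correlated with the play. The clean way is to condition on the coin sequence $\{\coin{t}\}$ (which is independent of $\switcherbandit$), observe that conditionally the batch endpoints are fixed, truncate the contribution of any batch longer than $\Theta(\gamma^{-1}\log T)$ (which happens with polynomially small probability, contributing the $\log T$ overhead), and then apply Theorem~\ref{thm:exp3} with $R = \tilde O(m\vmax(\lmax+\dmax)/\gamma)$ and horizon $K$; finally take expectations over the coins. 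Verifying that the reward the benchmark arm $\xoverline{\Algi{i^\star}}$ accrues in this conditional/truncated model still lower-bounds $\xoverline{\opt}(\jobs) - \tilde O(\text{small})$ is the last piece that needs care, and it relies on the fact that a random restart of $\xoverline{\Algi{i}}$ is distributed identically (within a batch) to a fresh start, so the per-batch rewards $\Rit{i}{k}$ are an unbiased decomposition of $\expect{\W{}{\xoverline{\Algi{i}}}}$.
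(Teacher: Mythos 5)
Your overall strategy is exactly the paper's: partition time into batches delimited by consecutive \heads, condition on the coin sequence so that the per-batch rewards form an oblivious MAB instance with a random horizon $\hat T+1\approx\gamma T$ and a random reward range governed by the maximum batch length, invoke Theorem~\ref{thm:exp3} conditionally, control the maximum of the (geometric) batch lengths, add a per-batch loss of $O(m\vmax(\lmax+\dmax))$ for syncing (Lemma~\ref{lem:state-match}, Theorem~\ref{thm:truthfulswitching}), and optimize $\gamma$. The paper bounds the maximum batch length via $\expect{\max_j g_j}\le H_k/\gamma$ for i.i.d.\ geometrics plus Jensen, rather than your truncation-at-$\Theta(\gamma^{-1}\log T)$ argument; both routes work and both produce the $\log T$ overhead.

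There is, however, a concrete quantitative error in your step (ii) that prevents your argument from reaching the stated bound. The welfare collected in a single time slot is at most $m\vmax$ (at most $m$ jobs per slot, each contributing value-per-length at most $\vmax$), so the per-batch reward is bounded by $m\vmax$ times the batch length, \emph{not} by $m\vmax(\lmax+\dmax)$ times the batch length. The whole point of the theorem's $(\lmax+\dmax)^{1/3}$ dependence is the asymmetry between the two terms: the switching/sync loss is $\gamma T\cdot m\vmax(\lmax+\dmax)$, while the MAB term is only $\tilde O\bigl(m\vmax\,\gamma^{-1/2}\sqrt{Tn\log n}\bigr)$ with no $(\lmax+\dmax)$ factor. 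Balancing these gives $\gamma=(\lmax+\dmax)^{-2/3}T^{-1/3}(n\log n)^{1/3}$ and the claimed regret. Your version, carried through consistently with the inflated reward range, would give a regret scaling as $(\lmax+\dmax)^{1}$ rather than $(\lmax+\dmax)^{1/3}$; and the balancing you actually write down (``$\gamma T(\lmax+\dmax)$ against $\sqrt{Tn\log n/\gamma}\cdot(\lmax+\dmax)$'') yields $\gamma=T^{-1/3}(n\log n)^{1/3}$ with no $(\lmax+\dmax)^{-2/3}$ factor, contradicting the $\gamma$ you then state. Replacing the reward range by $R=\tilde O(m\vmax/\gamma)$ (i.e., $m\vmax$ times the maximum batch length) repairs the calculation and recovers the theorem.
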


We finally consider stochastic jobs, and we focus on benchmarks that are robust-to-welfare-loss in expectation under this stochastic model, as described in Definition~\ref{def:robust-to-restart}. The following corollary is immediate.

\begin{corollary}
\label{cor:nclr-stochastic-regret}
Given a distribution $\mathcal{D}$ over jobs and a robust-to-welfare-loss benchmark $\opt(\jobs)$ (Definition~\ref{def:robust-to-restart}), there exists a scheduling mechanism whose expected regret relative to $\text{\bf E}_{\jobs\sim\mathcal{D}}\!\left[\opt(\jobs)\right]$ is bounded by $\tilde{O}(T^{2/3})$. 
\end{corollary}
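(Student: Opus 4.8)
The plan is to instantiate the Follow-The-Bandit-Switcher mechanism of Definition~\ref{def:FTBS} with the restart probability $\gamma=(\lmax+\dmax)^{-2/3} T^{-1/3}(n \log n)^{1/3}$ prescribed by Theorem~\ref{thm:regret-nclr}, and then chain two inequalities: the worst-case regret bound of Theorem~\ref{thm:regret-nclr} against the random-restarting benchmark $\xoverline{\opt}$, and the robustness-to-welfare-loss hypothesis of Definition~\ref{def:robust-to-restart}, which relates $\xoverline{\opt}$ back to $\opt$ in expectation over $\mathcal{D}$.

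First, Theorem~\ref{thm:regret-nclr} gives, for \emph{every} instance $\jobs$,
\[
\xoverline{\opt}(\jobs) - \expect{\W{}{\ftbs,\jobs}} \le O\!\left(m\vmax(\lmax+\dmax)^{1/3}(n\log n)^{1/3}T^{2/3}\log T\right) = \tilde{O}(T^{2/3}),
\]
where the inner expectation is over the internal coins of \ftbs. Since this holds pointwise in $\jobs$, it survives taking the expectation over $\jobs\sim\mathcal{D}$, yielding
\[
\text{\bf E}_{\jobs\sim\mathcal{D}}\!\left[\xoverline{\opt}(\jobs)\right] - \text{\bf E}_{\jobs\sim\mathcal{D}}\!\left[\expect{\W{}{\ftbs,\jobs}}\right] \le \tilde{O}(T^{2/3}).
\]
Second, the robustness-to-welfare-loss assumption on the benchmark (Definition~\ref{def:robust-to-restart}) with this $\gamma$ states
\[
\text{\bf E}_{\jobs\sim\mathcal{D}}\!\left[\xoverline{\opt}(\jobs)\right] \ge \text{\bf E}_{\jobs\sim\mathcal{D}}\!\left[\opt(\jobs)\right] - \gamma\,T\,\vmax(\lmax+\dmax).
\]
Adding the two displays, and observing that with the stated $\gamma$ one has $\gamma\,T\,\vmax(\lmax+\dmax) = \vmax(\lmax+\dmax)^{1/3}(n\log n)^{1/3}T^{2/3} = \tilde{O}(T^{2/3})$, we obtain
\[
\text{\bf E}_{\jobs\sim\mathcal{D}}\!\left[\opt(\jobs)\right] - \text{\bf E}_{\jobs\sim\mathcal{D}}\!\left[\expect{\W{}{\ftbs,\jobs}}\right] \le \tilde{O}(T^{2/3}),
\]
which is exactly the claimed regret bound; moreover \ftbs\ is order respecting and truthful by the guarantees propagated through Theorem~\ref{thm:truthfulswitching} and Lemma~\ref{lem:truth-restart}, so it is a valid scheduling mechanism.

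There is essentially no hard step: the corollary is a bookkeeping combination of the already-proved worst-case bound of Theorem~\ref{thm:regret-nclr} with the definitional robustness property. The only things to be careful about are that the two sources of randomness ($\jobs\sim\mathcal{D}$ and the mechanism's internal coins, including the restart coins $\coin{t}$) are independent, so the expectations compose cleanly, and that the choice of $\gamma$ equalizes the restart-loss term $\gamma\,T\,\vmax(\lmax+\dmax)$ with the bandit regret $\tilde{O}(T^{2/3})$ — the same balancing already carried out inside the proof of Theorem~\ref{thm:regret-nclr}.
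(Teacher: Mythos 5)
Your proof is correct, and it supplies exactly the bookkeeping argument that the paper leaves implicit: the paper states only that ``the following corollary is immediate'' after Theorem~\ref{thm:regret-nclr} and gives no details. You correctly observe that the worst-case regret bound of Theorem~\ref{thm:regret-nclr} holds pointwise in $\jobs$ and therefore survives taking the expectation over $\jobs\sim\mathcal{D}$, and that the prescribed $\gamma$ balances the restart-loss term $\gamma T\vmax(\lmax+\dmax)$ against the bandit regret so both are $\tilde{O}(T^{2/3})$.
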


\begin{proof}[Proof of Theorem~\ref{thm:regret-nclr}]
Let $C_1\triangleq \vmax(\lmax+\dmax)m$ and $C_2\triangleq \vmax m$ be constants. Fix a sequence of coins $\mathbf{K}[T] \triangleq [\coin{1},\ldots,\coin{T}]$ and let $t_1,t_2,\ldots,t_{\hat{T}}$ be the times $t\in[T]$ that coin $\coin{t}$ shows a $\heads$. As a convention, let $t_0\triangleq 1$ and $t_{\hat{T}+1}\triangleq \tilde{T}\geq T$ be the next time that coin $\coin{t}$ flips a $\heads$ if we continue flipping after $T$. By abuse of notation, we will use $\xoverline{\Algi{i}}$ to denote mechanism $\Algi i$ restarted at exactly these times. For each $i\in[n]$ and  $x\in [1,\hat{T}]$, let $\Rit{i}{x}$ be:
\[
\Rit{i}{x}\triangleq\sum_{t=\min(t_{x-1}+\lmax+\dmax,t_{x}-1)+1}^{t_{x}-1} \W{t}{\xoverline{\Algi{i}}}\geq \sum_{t=t_{x-1}}^{t_{x}-1} \W{t}{\xoverline{\Algi{i}}} -C_1
\]
Also, for the last interval $[t_{\hat{T}},T]$ and for each $i\in[n]$, let $\Rit{i}{\hat{T}+1}$  be
\[
\Rit{i}{\hat{T}+1}\triangleq\sum_{t=\min(t_{\hat{T}}+\lmax+\dmax,t_{x}-1)+1}^{T} \W{t}{\xoverline{\Algi{i}}}\geq \sum_{t=t_{\hat{T}}}^{T} \W{t}{\xoverline{\Algi{i}}} -C_1
\]

Therefore, by summing over all intervals $\{[t_{x-1},t_{x}]\}_{x\in[\hat{T}]}\cup [t_{\hat{T}},T]$, we have:
\begin{equation}
\label{eq:nclr-upper}
\forall i\in[n]:~~\sum_{x\in[\hat{T}+1]}\Rit{i}{x}\geq \W{}{\xoverline{\Algi{i}}}-C_1(\hat{T}+1)
\end{equation}
Note that after the coin shows a $\heads$ at time $t_{x}$, our mechanism either stays with the same algorithm and does a restart or the truthful switching from $\Algi{i_{x-1}}$ to $\Algi{i_{x}}$ as in Section~\ref{sec:nc-truth}. Also, Theorem \ref{thm:truthfulswitching} and the definition of truthful restart in Definition~\ref{def:random-restart} guarantee that our scheduling mechanism will be synced with the restarting algorithms $\xoverline{\Algi{i_{x}}}$ after $\lmax+\dmax$ units of time, and therefore the mechanism generates a welfare that is at least the welfare generated by the new arm in the interval $[\min(t_{x-1}+\lmax+\dmax,t_x-1)+1,t_{x}-1]$. Formally speaking, 
\begin{equation*}
\forall x\in[\hat{T}]:~~\sum_{t=t_{x-1}}^{t_x}\W{t}{\ftbs} \geq \Rit{i_x}{x},\quad\quad\sum_{t=t_{\hat{T}}}^{T}\W{t}{\ftbs} \geq \Rit{i_{\hat{T}}}{\hat{T}}
\end{equation*}
and therefore, by summing over all intervals $\{[t_{x-1},t_{x}]\}_{x\in[\hat{T}]}\cup [t_{\hat{T}},T]$, we have:
\begin{equation}
\label{eq:nclr-lower}
\W{}{\ftbs} \geq \sum_{x\in[\hat{T}+1]} \Rit{i_x}{x}
\end{equation}
Conditioned on the sequence of coin flips $\mathbf{K}[T]$ (and therefore times $t_1,t_2,\ldots,t_{\hat{T}}$), the rewards $\{\Rit{i}{x}\}_{x\in[\hat{T}+1],i\in[n]}$ define an (oblivious) adversarial instance of a MAB problem. For this adversarial instance, time horizon is indeed the random variable $(\hat{T}+1)$, i.e. number of $\heads$ in the sequence $\mathbf{K}[T]$ plus one. Moreover, if we let $\hat{U}=\max_{j\in[\hat{T}+1]}U_j$ where 
\[
j\in[1,\hat{T}+1]: U_j\triangleq t_j-t_{j-1}
\]
then the rewards of this instance would also be upper-bounded by the random variable $C_2\hat{U}$. Then, due to the optimal regret bound of $\switcherbandit$ (Theorem~\ref{thm:exp3}), we have:
\begin{equation}
\label{eq:nclr-regret}
\forall i\in[n]: \expect{\sum_{x\in[\hat{T}+1]}\Rit{i}{x}\lvert~ \mathbf{K}[T]}- \expect{\sum_{x\in[\hat{T}+1]}\Rit{i_x}{x}\lvert~ \mathbf{K}[T]}\leq O(\hat{U}\hat{T}^{1/2}C_2(n\log n)^{1/2})
\end{equation} 
Combining Inequalities (\ref{eq:nclr-upper}), (\ref{eq:nclr-lower}) and (\ref{eq:nclr-regret}) and taking expectations:
\begin{equation}
\label{eq:nclr-expect-regret}
\reg(\ftbs) \leq O(\expect{\hat{U}\hat{T}^{1/2}C_2(n\log n)^{1/2}})+\expect{C_1(\hat{T}+1)}
\end{equation}
Now, $\expect{C_1(\hat{T}+1)}=O(\gamma \cdot C_1 T)$. To bound the other term, we use the following fact, proved in~\cite{eisenberg2008expectation}, about independent and identically distributed geometric random variables. 
\begin{lemma}[\cite{eisenberg2008expectation}]
\label{lem:geometry2}
If $g_1,\ldots,g_k$ are i.i.d. and $g_i\sim\textrm{Geometric}(\gamma)$, then 
\[\expect{\max_{j\in[k]}g_j}\leq H_{k}\cdot \gamma^{-1}=O(\gamma^{-1}\log k)\]. 
\end{lemma}
Note that conditioned on $\hat{T}$, random variables $\{U_j\}_{j\in\hat{T}+1}$ are $(\hat{T}+1)$ i.i.d. $\textrm{Geometric}(\gamma)$ random variables. Using Lemma~\ref{lem:geometry2} we have:
\[
\expect{\hat{U}\hat{T}^{1/2}|\hat{T}}=\hat{T}^{1/2}\expect{\max_{j\in[\hat{T}]}U_j | \hat{T}}\leq \hat{T}^{1/2}\cdot H_ {\hat{T}}\cdot\gamma^{-1}=\gamma^{-1}\cdot O(\hat{T}^{1/2}\log(\hat{T}))
\]
Now, function $f(x)=x^{1/2}\log(x)$ is concave. By taking expectation and using Jensen's inequality, we further upper-bound this term. Hence:
\[
\expect{\hat{U}\hat{T}^{1/2}}\leq \gamma^{-1}.O(f(\expect{\hat{T}}))=\gamma^{-1/2}\cdot T^{1/2}\log(\gamma T)
\]
So, we can upper-bound the RHS of (\ref{eq:nclr-expect-regret}) by
\[
O(\gamma^{-1/2}T^{1/2}\log(\gamma T)C_2(n \log n)^{1/2})+O(\gamma.C_1 T)
\]
By setting $\gamma=(\frac{C_2}{C_1})^{2/3} T^{-1/3}(n\log n)^{1/3}$, we get the desired regret bound. 
		\end{proof}
\section{Lower Bounds}
\label{sec:lower-bounds}


We first state a lower bound of $\Omega(\sqrt T)$ on the regret for the clairvoyant scheduling problem. The proof of this theorem can be found in Appendix~\ref{sec:app-proofs}. We note that this lower bound can be extended to $\tilde{\Omega}(m \sqrt{T})$ when there are $m$ machines by simply having $m$ copies of the same set of jobs every time. Similarly the lower bound can be made to scale linearly with $\vmax$ and $\sqrt{\dmax}$, matching the upper bound we give in Theorem~\ref{lem:fts-regret}. 

\begin{theorem}
\label{thm:lb-experts}
There exists an instance of the clairvoyant scheduling problem where the regret of any online algorithm relative to the hindsight optimal posted-pricing-FIFO algorithm is $\Omega(\sqrt T)$.
\end{theorem}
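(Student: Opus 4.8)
The plan is to invoke Yao's minimax principle: it suffices to construct a distribution over instances on which every \emph{deterministic} online algorithm has expected regret $\Omega(\sqrt{T})$ against the best posted‑pricing‑FIFO mechanism in hindsight. The distribution is engineered so that the scheduling problem simulates the textbook lower bound for prediction with two experts whose per‑round rewards are random and anti‑correlated. Partition the horizon into $N=\Theta(T)$ consecutive blocks, each of constant length $k$ (taking $k=\Theta(\dmax)$ in the version that also exhibits the $\sqrt{\dmax}$ dependence). At the start of block $i$ a single ``cheap'' job $L_i$ arrives, with processing length equal to the block length $k$, a deadline at the end of the block, and value‑per‑length $a$; thus the only feasible way to serve $L_i$ is to occupy the entire block with it. Independently with probability $1/2$ the block is of ``type $1$'', in which case one step after $L_i$ a batch of $k-1$ ``expensive'' unit‑length jobs of value‑per‑length $b>a$ arrives, with deadlines at the end of the block; in a ``type $0$'' block nothing else arrives. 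Choose thresholds $p_1<a<p_2\le b$. Then $\mathrm{FIFO}_{p_1}$ always admits $L_i$, schedules it (best‑effort) over the whole block, and is therefore unable to serve any expensive job, so its welfare is exactly $ak\cdot N$; meanwhile $\mathrm{FIFO}_{p_2}$ always rejects $L_i$ and collects $b(k-1)$ in precisely the type‑$1$ blocks. Every other posted price is equivalent to $p_1$, to $p_2$, or to rejecting everything, so $\opt$ over all posted prices equals the maximum of these two quantities. Finally tune $a$ and $b$ (subject to $a<b\le\vmax$, e.g.\ $b=\vmax$, $a=\tfrac{(k-1)\vmax}{2k}$) so that the two mechanisms have \emph{equal} expected welfare $\mu$.

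Two estimates then finish the argument. First, writing $X$ for the welfare of $\mathrm{FIFO}_{p_2}$, which equals $b(k-1)$ times a $\mathrm{Binomial}(N,1/2)$ count of type‑$1$ blocks, we get $\mathbb{E}[\opt]=\mathbb{E}[\max(\mu,X)]=\mu+\mathbb{E}[(X-\mu)_+]=\mu+\Omega(\sqrt{T})$, using the standard anti‑concentration fact that the positive part of a centered sum of $\Theta(T)$ i.i.d.\ bounded terms has expectation $\Theta(\sqrt{\mathrm{Var}})=\Theta(\sqrt{T})$. Second, every deterministic online algorithm earns at most $\mu$ in expectation: in each block the algorithm must decide whether to start $L_i$ \emph{before} it learns the block's type (since $L_i$ can only be started at the very first slot of the block, whereas the expensive jobs revealing the type arrive one slot later), so this decision is independent of the type; if it starts $L_i$ then, by the no‑disposal property, it is committed to running $L_i$ for the whole block and cannot serve any expensive job of that block, earning $ak$; if it declines $L_i$ it earns $b(k-1)$ in a type‑$1$ block and $0$ otherwise, i.e.\ $\tfrac12 b(k-1)$ in expectation. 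By the choice of $a,b$ both options have per‑block expected value exactly $\mu/N$, so summing over blocks gives $\mathbb{E}[\text{welfare}]\le\mu$; a randomized online algorithm is a mixture of deterministic ones and does no better. Subtracting the two estimates yields regret $\Omega(\sqrt{T})$.

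The step I expect to be the crux is the second estimate, and specifically its reliance on no‑disposal. Without that property an online algorithm could \emph{tentatively} start $L_i$, observe whether the block turns out to be type $1$, and abandon $L_i$ (forfeiting only slots that produced no welfare anyway) in order to scoop up the expensive jobs --- thereby matching the better of the two experts in \emph{every} block and in fact beating the hindsight‑optimal single price, which would make the lower bound false; so the construction genuinely rests on the model's commitment semantics, and the gadget must be arranged so that the decisive, irreversible commitment ($L_i$'s first slot) is forced strictly before the type‑revealing jobs arrive. The remaining points are routine: checking that all other posted prices collapse to the three behaviours above; verifying the deadline/length feasibility constraints ($k-1\le\dmax$, $b\le\vmax$); and obtaining the stated $\tilde{\Omega}(m\vmax\sqrt{\dmax T})$ strengthening by taking $m$ independent copies of the instance on the $m$ machines, rescaling values by $\vmax$, and setting $k=\Theta(\dmax)$ so that there are $\Theta(T/\dmax)$ blocks each carrying $\Theta(\vmax\dmax)$ welfare.
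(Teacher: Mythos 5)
Your high‑level plan is the same as the paper's: reduce to the textbook two‑expert $\Omega(\sqrt T)$ lower bound by building a per‑round gadget in which two posted‑price FIFO mechanisms have identical expected welfare, every online algorithm is forced to an irrevocable choice before the coin is revealed, and the hindsight max is $\Omega(\sqrt T)$ above the common mean by anti‑concentration; the extensions to $m$, $\vmax$, $\dmax$ are also the same. Where you diverge is the gadget, and that divergence is where the gap is.

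The paper's gadget uses three \emph{unit‑length} jobs per round with staggered deadlines: $j_1$ (value 1, tight deadline), $j_2$ (value 2, deadline one slot later), and $j_3$ (value 2, arriving one slot later with probability $1/2$). Because every job is unit‑length, committing a slot to a job completes it on the spot; there is never a partially‑executed job that could be dropped, so the "decide before you see the coin" argument holds unconditionally. Your gadget instead uses one long job $L_i$ filling a whole block and makes the forced commitment the first slot of $L_i$. You correctly identify the crux: without a no‑disposal rule, an algorithm can tentatively start $L_i$, watch whether the expensive batch arrives at $t+1$, and then abandon $L_i$ and run the batch. Under your tuning $ak=\tfrac12 b(k-1)$ this tentative strategy earns $\tfrac12 ak + \tfrac12 b(k-1)=\tfrac32\mu/N$ per block, i.e.\ $\tfrac32\mu$ total, which exceeds $\mathbb{E}[\opt]=\mu+\Theta(\sqrt T)$ for large $T$, so the claimed lower bound is not just unproven but \emph{false} for that construction absent no‑disposal.

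The problem is that the paper's model, as stated, does not assume no‑disposal in the clairvoyant setting: preemption is permitted, the welfare is defined to simply not credit any job $j$ with $|\tau_j\cap[a_j,d_j]|<l_j$, and nothing forbids an online algorithm from allocating a job fewer slots than it needs (i.e., starting and then dropping it). So your proof, as written, relies on an assumption the model does not grant, and the one place it truly matters is exactly the step you flagged as the crux. The fix is essentially what the paper does: make the commitment instantaneous rather than extended, by using unit‑length jobs whose deadlines (rather than their durations) force the pre‑coin decision. That removes the dependence on commitment semantics entirely.
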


Next we show a lower bound of $\tilde{\Omega}(T^{2/3})$ on the regret for the non-clairvoyant setting that matches our upper bound within polylogarithmic factors. Our lower bound follows by  a reduction from the lower bound given in  \cite{Dekel2014} for the multi-armed bandit problem with switching costs.

For  the bandit problem with switching costs with $n$ actions,   \cite{Dekel2014} show that there exists a sequence of loss functions $\ell_1, \ell_2, \ldots \ell_T$, where $\ell_i: [n] \rightarrow [0,1]$, such that any online algorithm incurs a regret of at least $\tilde{\Omega}(n^{1/3} T^{2/3})$.  We use this loss sequence to define an instance of the non-clairvoyant scheduling problem as follows. First we give a reduction to regret against the welfare benchmark without random restarts. Later we show how to extend the lower bound to apply against the random restart benchmark.

\begin{figure}[h]
	\centering
	\includegraphics[width=1.0\textwidth]{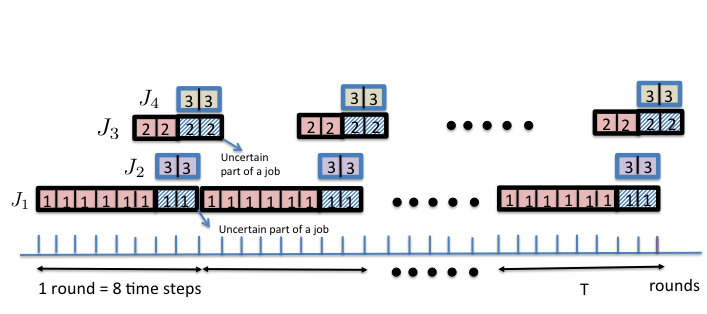}
\caption{ Non-clairvoyant lower-bound instance }
	\label{fig:lowerbound}
\end{figure}
In our lower bound instance, we fix $n= 2$, and let $\ell_i(1)$ and $\ell_i(2)$ denote the losses of actions 1 and 2 in round $i$ as defined in \cite{Dekel2014}.  We map each round of the game to 8 time steps; that is, round $i$ corresponds to the time interval $[8i, 8(i+1)-1]$.  Our instance has 4 sets of jobs $J_1$, $J_2$, $J_3$, and $J_4$, as shown in Figure \ref{fig:lowerbound}. 
In each round, one job from each set  arrives.
Jobs in the set $J_1$ arrive at the beginning of each round; that is, at time steps $8i$ for $i = 0, 1,2, \ldots ..T$. The processing length of a job $j \in J_1$ that arrives in the round $i$ is 6 with probability $p_i(1)$ and 8 with probability $(1- p_t(1))$, where $p_i(1) = 1/2 +  \ell_i(1)/2$.  Observe that processing lengths of the jobs in $J_1$ depend on losses defined by \cite{Dekel2014}. Further, the jobs in $J_1$ have a value of 1 per unit length.  In round $i$, a job from  $J_2$ arrives at time $8(i+1)-2$ for $i = 0, 1, 2, 3, \ldots ..T$, and has a processing length of 2.  The value per unit length of jobs in $J_2$ is 3. 

The set $J_3$ consists of jobs that arrive at time instants $8i-3$ for $i = 1, 2, \ldots T$,  and have value per unit length of 2. The processing length of job $j  \in J_3$ released in the round $i$ is 4 with probability $p_i(2)$ and 2 with probability $1 - p_i(2)$, where $p_i(2) = \ell_i(2)$.  Similar to the jobs in $J_1$, the processing lengths of jobs in $J_3$ depend on the losses defined by the result of \cite{Dekel2014}. Finally, the jobs in set $J_4$ are released at time steps $8i - 1$ for $i = 1, 2, \ldots T$, and have a processing length of 2 and value per unit length of 3.

The deadlines of all jobs are equal to their arrival times. (This condition is not necessary but simplifies the construction.) Hence, if a job is not scheduled upon its arrival, the algorithm loses the job.

Let $\Algi{1}$ and $\Algi{2}$ denote the two  posted price scheduling mechanisms  with prices 1 and 2 and  FIFO scheduling policy. (Note that other posted price mechanisms, for example one that posts a price of 3, lose a constant factor in each round, hence we do not consider them.)
The algorithms  $\Algi{1}$ and $\Algi{2}$ correspond to the action set in the bandit problem with switching costs.   The following two lemmas follow from the construction of lower bound instance. See Appendix~\ref{sec:app-proofs} for a proof.

\begin{lemma}
	\label{lem:reward}
For all rounds $i = 0,1, \ldots T$, the expected value of $\Algi{1}$ in round $i$ is $10 - 2 \ell_i(1)$ and expected value of $\Algi{2}$ in round $i$ is $10 - 2 \ell_i(2)$.
\end{lemma}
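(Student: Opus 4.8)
The plan is to simulate each of the two posted‑price FIFO mechanisms directly on the instance, one round at a time. The crucial structural observation is that with immediate deadlines and no preemption, every admitted job is either started exactly at its arrival or lost forever, so inside any block of $8$ time steps the schedule is completely pinned down by the admission rule together with the realized processing lengths. Since $\Algi{1}$ posts price $1$ and the least per‑unit value present is the value $1$ of the $J_1$ jobs, $\Algi{1}$ admits every arriving job; since $\Algi{2}$ posts price $2$ it rejects exactly the $J_1$ jobs and admits $J_2$, $J_3$, $J_4$.

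First I would establish a ``clean boundary'' invariant by induction on the round index. For $\Algi{1}$ the claim is that the machine is free at the arrival instant of each $J_1$ job, so that job is always started; in the inductive step, a realized length of $6$ vacates the machine in time for the $J_2$ job of the same round (which just fills the remainder of the block), while a realized length of $8$ fills the block by itself, and in either case the $J_3$ and $J_4$ jobs whose arrivals fall inside this block are blocked by the running $J_1$ (or $J_2$) job, so the machine is free again at the next $J_1$ arrival. For $\Algi{2}$ the analogous invariant is that the machine is free at the arrival instant of each $J_3$ job; a realized length of $2$ frees the machine in time for the following $J_4$ job, a realized length of $4$ fills the window by itself and blocks that $J_4$ job, and in both cases the $J_2$ job of the round is blocked by the next $J_3$ job, which always covers the $J_2$ slot because its length is at least $2$. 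The inductions close because a $J_1$ job has length at most $8$, one full block, and a $J_3$ job length at most $4$, so occupancy spills over into at most the immediately following block and the pattern is self‑sustaining regardless of which lengths are realized.

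With the invariant in place, the welfare each mechanism derives from the four jobs of a fixed round $i$ is a function of the realized length of a single ``pivot'' job alone --- the $J_1$ job of round $i$ for $\Algi{1}$, the $J_3$ job of round $i$ for $\Algi{2}$ --- and I would enumerate the two cases. Under $\Algi{1}$: the short pivot lets the round serve $J_1$ and $J_2$, worth $6\cdot 1 + 2\cdot 3 = 12$, whereas the long pivot serves only $J_1$, worth $8\cdot 1 = 8$, with $J_3$ and $J_4$ blocked in both cases. Under $\Algi{2}$: $J_1$ is rejected and $J_2$ is blocked in both cases, the short pivot serves $J_3$ and $J_4$, worth $2\cdot 2 + 2\cdot 3 = 10$, and the long pivot serves only $J_3$, worth $4\cdot 2 = 8$. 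Plugging in the probabilities the construction attaches to the short and long realizations of each pivot and taking expectations collapses these two‑point distributions to $10 - 2\ell_i(1)$ and $10 - 2\ell_i(2)$, as claimed.

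I expect the only place needing genuine care is the boundary bookkeeping in the second step: one must check that a round's welfare is insensitive to the length realizations of neighbouring rounds --- which is exactly what the idle‑at‑the‑boundary invariant secures --- and one should treat the very first round separately, since some of its $J_3$ and $J_4$ jobs are absent. Everything else is a finite verification; it is also worth reconciling the orientation of the pivot‑length probabilities with the signs in the statement, so that the short realization is indeed the more valuable one.
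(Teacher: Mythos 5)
Your proposal is correct and follows essentially the same argument as the paper: track machine occupancy across rounds to show $\Algi{1}$ schedules exactly the $J_1$ (and conditionally $J_2$) jobs while $\Algi{2}$ schedules exactly the $J_3$ (and conditionally $J_4$) jobs, reduce each round's welfare to a two-point distribution governed by a single pivot-length realization, and take expectations. Your closing caveat about reconciling the probability orientation is warranted: as literally written the construction takes $p_i(1) = 1/2 + \ell_i(1)/2$ to be the probability of the \emph{short} $J_1$ realization, which yields $10 + 2\ell_i(1)$ rather than $10 - 2\ell_i(1)$; the lemma's formula holds only under the convention that $p_i(1)$ is the probability of the long realization (equivalently $p_i(1) = 1/2 - \ell_i(1)/2$), which is what the paper's own arithmetic in this proof tacitly assumes.
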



\begin{lemma}
		\label{lem:switch}
If an online algorithm switches from $\Algi{2}$ to $\Algi{1}$ in any round $i$, it incurs a loss of at least 6.
\end{lemma}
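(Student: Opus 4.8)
The plan is to track the single machine's occupancy through the round in which the switch happens and to pin down a specific high‑value job that the algorithm is forced to drop. Write $\mathcal{A}$ for the online algorithm, and recall that in the instance of Figure~\ref{fig:lowerbound} there is one machine, every deadline equals the corresponding arrival time (so a job not started on arrival is lost forever), and the non‑clairvoyant model forbids preemption, so once a job is started it occupies a contiguous block of slots that cannot be vacated.

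First I would make precise what it means for $\mathcal{A}$ to be emulating $\Algi{2}$ just before round $i$ and $\Algi{1}$ just after. Among the jobs arriving in the window $[8(i-1),8i-1]$ are $J_3^{(i)}$, arriving at $8i-3$ with value‑per‑length $2$, and $J_4^{(i)}$, arriving at $8i-1$ with value‑per‑length $3$ and length $2$; the price‑$2$ mechanism $\Algi{2}$ accepts $J_3^{(i)}$ and, when $J_3^{(i)}$ has length $2$, also accepts $J_4^{(i)}$. Hence an algorithm matching $\Algi{2}$'s decisions enters time $8i$ with its machine occupied — by $J_4^{(i)}$ if $J_3^{(i)}$ is short, or still by $J_3^{(i)}$ if $J_3^{(i)}$ has length $4$ (in which case $J_3^{(i)}$ alone spans $[8i-3,8i]$) — so in particular slot $8i$ is busy. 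On the other side, behaving like $\Algi{1}$ from round $i$ onward forces accepting $J_1^{(i)}$, which arrives at $8i$ with an immediate deadline, and this is possible only if slot $8i$ is free at that instant.

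The core of the argument is then a short case analysis on the (random) length of $J_3^{(i)}$. If it is $2$: by non‑preemption $\mathcal{A}$ cannot have started $J_4^{(i)}$ and still have slot $8i$ free, so $\mathcal{A}$ must reject $J_4^{(i)}$, forgoing $3\cdot 2 = 6$ units of welfare that the $\Algi{2}$‑emulation was collecting. If it is $4$: since $J_3^{(i)}$ alone spans $[8i-3,8i]$, to free slot $8i$ the algorithm must reject $J_3^{(i)}$, forgoing $2\cdot 4 = 8$ units; moreover, with $[8i-3,8i]$ then idle, the later arrival $J_4^{(i)}$ at $8i-1$ would again occupy slot $8i$, so $J_4^{(i)}$ is rejected as well. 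In either sub‑case $\mathcal{A}$ gives up at least $6$ units of welfare relative to what it would have collected by continuing to emulate $\Algi{2}$ through round $i$, and since the forgone job's deadline equals its arrival time this welfare cannot be recouped. This is the claimed loss of at least $6$.

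The main obstacle is bookkeeping rather than combinatorics: one must fix the formal meaning of ``$\mathcal{A}$ switches from $\Algi{2}$ to $\Algi{1}$ in round $i$'' — e.g.\ in terms of which of $\Algi{1},\Algi{2}$ the algorithm's recent decisions agree with — so that the forgone value‑$6$ job is legitimately charged to the switch, consistently with the per‑round accounting used in Lemma~\ref{lem:reward}, and is not double‑counted against the regret already incurred by playing a suboptimal arm. A secondary routine check is that $J_3^{(i)}$ and $J_4^{(i)}$ straddle the round boundary exactly as drawn in Figure~\ref{fig:lowerbound}, which follows immediately from writing out the arrival times.
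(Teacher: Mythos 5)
Your argument captures the same central combinatorial fact as the paper's: in every realization $\Algi{2}$ occupies slot $8i$ while $\Algi{1}$ needs that slot free for $J_1^{(i)}$, so a transition necessarily drops a job of value at least~$6$. The paper organizes the case split by the time of the switch (completion of a $J_3$-job versus a $J_4$-job), while you organize it by the realized length of $J_3^{(i)}$; the dropped jobs you identify in your two sub-cases ($J_4^{(i)}$ of value $6$ when $J_3^{(i)}$ is short, $J_3^{(i)}$ of value $8$ when it is long) are the same ones the paper charges in the corresponding sub-cases.

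There is, however, one branch you omit. Your analysis assumes the algorithm actually frees slot $8i$ in order to catch $J_1^{(i)}$. An algorithm ``switching in round $i$'' may instead run its $\Algi{2}$-job through slot $8i$ (a long $J_3^{(i)}$, or an accepted $J_4^{(i)}$) and only begin matching $\Algi{1}$ afterwards; then $J_1^{(i)}$ itself, of value at least $6$, is the job that is lost. This is precisely the paper's Case~1 (switch at the completion of $J_4^{(i)}$) and the length-$4$ half of its Case~2. You flag the meaning of ``$\mathcal{A}$ switches from $\Algi{2}$ to $\Algi{1}$ in round $i$'' as a bookkeeping question, but the resolution is actually a genuine dichotomy, not just bookkeeping: either the algorithm frees slot $8i$ (your two sub-cases, dropping $J_4^{(i)}$ or $J_3^{(i)}$) or it does not (dropping $J_1^{(i)}$). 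Adding that second, easy branch makes your argument complete and essentially identical in substance to the paper's proof.
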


From Lemmas \ref{lem:reward} and \ref{lem:switch}, we get the following theorem.

\begin{theorem}
\label{thm:lbbandit}
Minimax regret of non-clairvoyant scheduling problem is at least $\tilde{\Omega}(T^{2/3})$.
\end{theorem}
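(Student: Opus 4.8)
The plan is to deduce Theorem~\ref{thm:lbbandit} from the lower bound of \citet{Dekel2014} by showing that any low-regret non-clairvoyant scheduling mechanism would yield a low-regret algorithm for the bandit-with-switching-costs problem on the hard loss sequence $\ell_1,\dots,\ell_T$. Concretely, I would take an arbitrary online (non-clairvoyant) scheduling algorithm $\Alg$, run it on the instance of Figure~\ref{fig:lowerbound}, and extract from it a bandit player: in round $i$, the bandit player plays action $1$ if $\Alg$ behaves like $\Algi{1}$ (posted price $1$, admitting the $J_1$ job) in the block $[8i,8(i+1)-1]$, and action $2$ if $\Alg$ behaves like $\Algi{2}$ (posted price $2$) there. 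The point of mapping each round to $8$ time steps with the four job families $J_1,\dots,J_4$ placed at the offsets described is exactly that $\Alg$'s per-round behaviour is forced into one of these two ``modes'' up to $O(1)$ welfare slack, and a genuine change of mode between consecutive rounds costs $\Alg$ at least a constant in welfare — this is the content of Lemmas~\ref{lem:reward} and~\ref{lem:switch}, which I may assume.

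The key steps, in order: (i) using Lemma~\ref{lem:reward}, write $\opt(\jobs)$ (the better of $\Algi{1}$ and $\Algi{2}$ in hindsight, summed over rounds) as $10T - 2\min\{\sum_i \ell_i(1), \sum_i \ell_i(2)\} \cdot (\text{up to }O(1))$, so that maximizing scheduling welfare against this benchmark is equivalent, up to affine rescaling, to minimizing the cumulative loss $\sum_i \ell_i(a_i)$ of the induced action sequence; (ii) using Lemma~\ref{lem:switch}, observe that the number of rounds in which $\Alg$ switches mode is at most $\frac{1}{6}$ times the welfare $\Alg$ forgoes relative to always staying in one mode, so the switching cost of the induced bandit player is absorbed into $\Alg$'s welfare loss; (iii) combine (i) and (ii): $\Omega(1)\cdot\bigl(\text{scheduling regret of }\Alg\bigr) \geq \bigl(\text{bandit-with-switching-cost regret of the induced player on }\ell\bigr)$; (iv) invoke \citet{Dekel2014}'s $\tilde\Omega(n^{1/3}T^{2/3}) = \tilde\Omega(T^{2/3})$ lower bound (here $n=2$) to conclude that the induced player, hence $\Alg$, has regret $\tilde\Omega(T^{2/3})$. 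Finally, since \cite{Dekel2014}'s bound holds against \emph{any} online algorithm and the reduction is oblivious (the loss sequence is fixed in advance of $\Alg$'s coin flips), this holds for randomized $\Alg$ as well.

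One subtlety the excerpt flags is extending the bound from the plain welfare benchmark $\opt(\jobs)$ to the random-restart benchmark $\xoverline{\opt}(\jobs)$ of Definition~\ref{def:robust-bench}. For this I would argue that, on this particular instance, infrequent random restarts of $\Algi{1}$ and $\Algi{2}$ change their expected welfare only by a lower-order amount — each restart ``freezes'' a window of length $\lmax+\dmax = O(1)$, and with the relevant restart probability $\gamma = \tilde\Theta(T^{-1/3})$ the total expected loss is $\tilde O(\gamma T) = \tilde O(T^{2/3})$, which is at most a constant factor of the claimed bound — so $\xoverline{\opt}(\jobs) \ge \opt(\jobs) - \tilde O(T^{2/3})$ and the lower bound transfers. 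Alternatively one can choose the deadlines in the instance (all equal to arrival times, as stated) so that a restart landing on a job's arrival slot simply drops that one job, again an $O(1)$-per-restart effect.

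The main obstacle I expect is not any single inequality but making the round-to-mode decoding in step~(i)--(ii) airtight: I need that $\Alg$'s welfare in round $i$ genuinely determines (up to $O(1)$) which posted price it is ``effectively'' running that round, including handling the case where $\Alg$ does something not of the form of either $\Algi{1}$ or $\Algi{2}$ — e.g.\ admitting some jobs from $J_1$ and some from $J_3$ in a way that straddles a block boundary, or carrying a long $J_1$ job across rounds. The block length $8$ and the staggered arrival offsets of $J_2,J_3,J_4$ are designed precisely so that any such ``mixed'' behaviour is dominated, welfare-wise, by one of the two pure modes, so the decoding is well-defined; verifying this case analysis carefully (it is essentially what Lemmas~\ref{lem:reward} and~\ref{lem:switch} encapsulate) is where the real work lies, and I would lean on those two lemmas to discharge it rather than redo it from scratch.
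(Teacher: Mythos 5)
Your proposal is correct and follows essentially the same reduction as the paper: reduce to the Dekel et al.\ lower bound for bandits with switching costs, using Lemma~\ref{lem:reward} to identify the per-round welfare of $\Algi{1},\Algi{2}$ with the losses $\ell_i(1),\ell_i(2)$ and Lemma~\ref{lem:switch} to charge mode changes as switching cost. Two small points the paper makes explicit that you should too: (a) the reason this is a \emph{bandit} rather than full-information problem is that non-clairvoyance hides the processing lengths of the jobs the algorithm does not run, so only $\ell_i(a_i)$ is observed; and (b) Lemma~\ref{lem:switch} only gives a switching cost for transitions from $\Algi{2}$ to $\Algi{1}$, not vice versa, so one must note (as the paper does) that Dekel et al.'s argument still gives $\tilde\Omega(T^{2/3})$ with a one-sided switching cost, losing a constant factor. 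Also, your discussion of the random-restart benchmark actually belongs to the separate Theorem~\ref{thm:lbrestart}, not to this statement.
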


\begin{proof}
Consider an online scheduling algorithm for the non-clairvoyant scheduling problem. At the beginning of each round $i$, it can either follow $\Algi{1}$ and obtain a value of $10 - 2 \ell_i(1)$ or follow $\Algi{2}$ and get a value $10 - 2 \ell_i(2)$. Furthermore, since we are in the non-clairvoyant setting, the online algorithm won't know the value it can obtain from the algorithm it is not following.  Since, the value obtained by $\Algi{1}$ and $\Algi{2}$ are exactly the same in each round except for the terms $-\ell_i(1)$ and $-\ell_i(2)$, the regret of the non-clairvoyant scheduling algorithm is equal to the regret it suffers on the losses at each round $i$. Moreover, switching from $\Algi{2}$ to $\Algi{1}$ in any round incurs a cost of 6.
Hence, our scheduling instance corresponds to bandit with switching cost problem, where switching from  action 2  to action 1 incurs a cost. 
The result of  \cite{Dekel2014}  shows that the problem has a minimax regret of least $\tilde{\Omega}(T^{2/3})$, when there is a switching cost between any pair of actions. However, it is easy to modify the proof in   \cite{Dekel2014}, where there is  a switching cost only between  action 2 to action 1, losing a factor of 2 in the regret bound \cite{Dekel2014}. This completes our reduction.
\end{proof}

 Now we argue that Theorem (\ref{thm:lbbandit}) can be extended to random restarting benchmarks. Before that we make the following simple observation regarding the lower bound instance in Theorem (\ref{thm:lbbandit}). Consider the posted price mechanisms $\Algi{1}$ and $\Algi{2}$ as defined in the proof of Theorem (\ref{thm:lbbandit}). $\xoverline{\Algi{1}} (\gamma),{\xoverline{\Algi{2}}} (\gamma)$ denote the random restarting versions of them with restart parameter $\gamma$. 

 
\begin{lemma}
\label{lem:restart}
For some constant $ c \geq 11$, we have:

\[
j\in\{1,2\}:~~\text{\bf E}[W(\xoverline{\Algi{j}} (\gamma), J)] \geq W(\Algi{j}, J) - c \cdot \gamma \cdot T\] 

\end{lemma}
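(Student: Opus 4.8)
The plan is to bound, round by round, the welfare lost by the random-restarting version of $\Algi{j}$ relative to the non-restarting version, and then sum over the (at most $T$) rounds. The key observation is that in the lower-bound instance all deadlines equal arrival times, so a job is served if and only if it is scheduled exactly at its arrival time; hence the only way a restart can hurt is when the restart ``blackout'' interval $[t,t+\dlmax]$ happens to straddle the arrival time of some job that would otherwise have been served. Since every round contributes at most $4$ jobs and the total value served per round is at most $10$ (by Lemma~\ref{lem:reward}, the per-round value is $10-2\ell_i(\cdot)\le 10$), a restart occurring in or near round $i$ can cost at most a constant amount of welfare attributable to that round — specifically at most the full round value $10$ plus a bounded contribution from the $\lmax+\dmax$-length tail of the blackout, which here is a fixed constant since $\lmax,\dmax$ are $O(1)$ in this instance.

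Concretely, I would first fix the instance $J$ and a realization of all job lengths, and argue deterministically: condition on the set of times $t\in[T]$ at which $\xoverline{\Algi{j}}(\gamma)$ triggers a restart. Each restart at time $t$ causes the mechanism to reject every job arriving in $[t, t+\dlmax]$ and to shift later arrivals; but because deadlines are immediate, shifting a later job's arrival time past its (now-recomputed) deadline simply means it is rejected — it cannot gain welfare, only lose it. So the welfare of $\xoverline{\Algi{j}}(\gamma)$ is at least that of $\Algi{j}$ minus the total value of jobs whose arrival time falls inside some blackout interval. Each blackout interval has length $\dlmax+1=O(1)$ and so intersects at most $O(1)$ rounds, each round supplying at most $10$ units of value; thus each restart destroys at most a constant amount $c'$ of welfare, giving $W(\Algi{j},J) - c'\cdot(\#\text{restarts})$.

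Next I would take expectations over the internal randomness of the restart process. The number of restarts in $[T]$ is a sum of $T$ independent Bernoulli$(\gamma)$ variables, so its expectation is exactly $\gamma T$. Combining, $\text{\bf E}[W(\xoverline{\Algi{j}}(\gamma),J)] \ge W(\Algi{j},J) - c'\gamma T$, and choosing $c\ge\max(c',11)$ (the constant $11$ is just to match the statement's bookkeeping with the $10$-per-round value and a unit slack) yields the claim. Nothing here is tight; one only needs a clean upper bound on per-restart welfare loss.

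The main obstacle — and the only place requiring care — is the precise argument that shifting a later job's arrival time under a restart can never help $\Algi{j}$ on this instance. One must check both that a shifted job cannot displace some other job into being served (FIFO scheduling plus immediate deadlines means a job either runs at its arrival slot or is lost, and shifting only moves arrival slots later, so it cannot create a serve where there was none) and that the deadline-adjustment step of Definition~\ref{def:random-restart} only ever turns ``served'' into ``rejected''. Once this monotonicity is established, the counting of intersected rounds and the Bernoulli expectation are routine.
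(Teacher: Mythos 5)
Your proposal is correct and uses essentially the same argument as the paper: bound the welfare loss per restart by a constant (the value of the jobs in the round(s) the blackout touches), observe that the expected number of restarts is $\gamma T$, and multiply. The additional discussion you give about why shifting arrival times cannot create new serves is not needed for the inequality (the lemma only requires a one-sided bound on the loss, not that restarts never help), but it does not detract from the argument.
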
 

\begin{proof}
Proof of the lemma follows immediately from the observation that if $\Algi{j}$ for $j = 1, 2$ restarts at time $t$, and $t$ is in  round $i$, then it loses values of jobs in that round.  The expected number of restarts by the algorithms is at most $\gamma \cdot T$. As the total value of jobs in each round is at most 11,  the statement of the lemma holds for $c \geq 11$.
\end{proof}

To extend the lower bound to random restarting benchmarks, we need the following theorem from \cite{Dekel2014} for the bandit with switching costs problem.

\begin{theorem}[\cite{Dekel2014}]
\label{thm:dekel}
Let $\mathcal{A}$ be a multi-armed bandit algorithm that guarantees an expected regret (without switching costs) of $O(T^\alpha)$ then there exists a sequence of loss functions that forces $\mathcal{A}$ to make $\tilde{\Omega}(T^{2(1-\alpha)})$ switches.
\end{theorem}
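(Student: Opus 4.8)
This is a result of \citet{Dekel2014}, and the plan is to reproduce the regret--switches tradeoff at its core: exhibit a single randomized (but oblivious) loss sequence against which any algorithm with regret $O(T^{\alpha})$ must switch $\tilde\Omega(T^{2(1-\alpha)})$ times. I would take $n=2$ for concreteness; the general case only affects the polylogarithmic and $\mathrm{poly}(n)$ factors absorbed into $\tilde\Omega$. First I would draw a uniformly random ``good'' arm $i^\star\in\{1,2\}$ and fix a gap $\epsilon>0$ to be tuned at the end, and set $\ell_t(a)=\tfrac12+\tfrac12\bigl(W_t+\epsilon\,\mathbf{1}[a\neq i^\star]+\zeta_t(a)\bigr)$, clipped to $[0,1]$, where each $\zeta_t(a)$ is fresh per-round mean-zero noise and, crucially, $W_t$ is a \emph{multi-scale random walk}: view $[T]$ as the leaves of a balanced binary tree of depth $O(\log T)$, put an independent mean-zero increment (e.g.\ rescaled Rademacher, of magnitude $O(1/\log T)$ so that $W$ stays bounded) on each tree edge, and let $W_t$ be the sum of increments along the root-to-leaf path of $t$. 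The design goal is that $W_t$ is a large, hierarchically slowly-varying component common to both arms, so a raw observation of one arm's loss is almost uninformative about $\epsilon$; the only way to detect $i^\star$ is to \emph{difference} the two arms' losses at nearby times, which costs a switch.

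Next I would run the two-worlds argument. Let $\mathbf{P}_1,\mathbf{P}_2$ be the laws of the algorithm's observation sequence under $i^\star=1$ and $i^\star=2$, and let $M$ denote the (random) number of switches. The technical heart is an information bound: conditioning on the algorithm's switch times, within any maximal run between consecutive switches the algorithm plays a fixed arm and therefore ``sees'' only $W_t+\zeta_t(\cdot)$ plus a constant, which is (nearly) independent of $i^\star$ given the past; hence essentially all of the divergence is created at switch points, and the multi-scale structure of $W$ lets one cap the contribution of a switch occurring at time-scale $2^k$ by $O(\epsilon^2)$, summing over the $O(\log T)$ scales to $\mathrm{KL}(\mathbf{P}_1\,\|\,\mathbf{P}_2)\le O\!\bigl(\epsilon^2\,\expec{M}\,\log T\bigr)$. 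I would prove this by a chain-rule decomposition of the KL along time together with the conditional independence afforded by the random-walk construction; this is where the truncation to $[0,1]$ and the hierarchical correlation profile of $W$ really matter.

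Finally I would combine the pieces. If $\mathrm{KL}(\mathbf{P}_1\,\|\,\mathbf{P}_2)=o(1)$ then, by Pinsker's inequality, the algorithm cannot identify $i^\star$ with constant advantage and so on at least one world plays the bad arm a constant fraction of the time, incurring expected regret $\Omega(\epsilon T)$. Thus, on this instance, either the regret is at least some fixed $c\,\epsilon T$ or $\epsilon^2\,\expec{M}\,\log T=\Omega(1)$, i.e.\ $\expec{M}=\Omega\!\bigl(\epsilon^{-2}/\log T\bigr)$. Choosing $\epsilon$ to be a large enough constant multiple of $T^{\alpha-1}$ so that $c\,\epsilon T$ exceeds the assumed regret bound $O(T^{\alpha})$ rules out the first branch, leaving $\expec{M}=\Omega\!\bigl(T^{2(1-\alpha)}/\log T\bigr)=\tilde\Omega\!\bigl(T^{2(1-\alpha)}\bigr)$, as claimed (equivalently, regret $R$ and switches $M$ obey $R=\tilde\Omega(T/\sqrt{M})$). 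The main obstacle is exactly the KL bound in terms of $M$: one must arrange the random walk so that the adversary stays oblivious, within-run observations leak essentially no information about $\epsilon$, and each switch leaks only $O(\epsilon^2)$ despite the logarithmically many scales --- all three rest on the multi-scale correlation structure of $W$, and getting the conditioning right there is the delicate step.
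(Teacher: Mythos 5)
The paper does not give a proof of this theorem: it is stated as an imported result, attributed to \citet{Dekel2014}, and used as a black box in the proof of Theorem~\ref{thm:lbrestart}. So there is no ``paper's own proof'' to compare against. That said, your sketch is a faithful reconstruction of the argument in \citet{Dekel2014}: the multi-scale random walk built on a depth-$O(\log T)$ binary tree as a common slowly-varying loss component, the observation that within a run the learner sees only the arm-independent part so information about $i^\star$ accrues only at switches, the per-switch KL contribution of $O(\epsilon^2)$ summed over $O(\log T)$ scales to get $\mathrm{KL}\le O(\epsilon^2\,\expec{M}\log T)$, and the final trade-off $R=\tilde\Omega(T/\sqrt{\expec{M}})$ obtained by tuning $\epsilon\asymp T^{\alpha-1}$. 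These are precisely the ingredients of the cited proof; you have identified the delicate step (the conditional-independence/chain-rule KL bound using the hierarchical correlation structure) correctly. The only caveat is that your sketch is a plan rather than a complete proof of that KL bound, which is indeed the nontrivial technical content of \citet{Dekel2014}, but since the paper itself defers to that reference, this level of detail is commensurate.
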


Combining Lemma \ref{lem:restart} and above theorem, we prove the following theorem.

\begin{theorem}
\label{thm:lbrestart}
No online algorithm can achieve a regret  $O(T^{2/3 - \epsilon})$ for any $\epsilon > 0$ against a random restarting benchmark with restart parameter $\gamma \in (T^{-1}, T^{-1/3}]$.
\end{theorem}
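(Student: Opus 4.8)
The plan is to argue by contradiction on the same family of instances used for Theorem~\ref{thm:lbbandit}, with the losses $\ell_1,\dots,\ell_T$ drawn from the hard distribution of \cite{Dekel2014}, and to reduce to the regret/switching trade-off of Theorem~\ref{thm:dekel}. Suppose some online scheduling algorithm $\Alg$ has regret $O(T^{2/3-\epsilon})$ relative to the $\gamma$-random-restart benchmark $\xoverline{\opt}$ on these instances. The first step is to pass from $\xoverline{\opt}$ back to the non-restarting benchmark $\opt$: Lemma~\ref{lem:restart} gives, pointwise in $\jobs$, $\xoverline{\opt}(\jobs)\ge \opt(\jobs)-c\gamma T$, so after taking expectations the regret of $\Alg$ relative to $\opt$ is at most $O(T^{2/3-\epsilon})+c\gamma T$.

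Next I would reuse the reduction from the proof of Theorem~\ref{thm:lbbandit} essentially verbatim: on this instance (immediate deadlines, two relevant posted-price mechanisms $\Algi1,\Algi2$), $\Alg$ induces a two-armed bandit algorithm $\mathcal{A}$ over $\{\Algi1,\Algi2\}$ that observes only the welfare of the arm it effectively plays in each round. By Lemmas~\ref{lem:reward} and~\ref{lem:switch}, the expected welfare of $\Alg$ per round is $10-2\ell_i(a_i)$ minus $6$ for each $\Algi2\to\Algi1$ switch, while the better of the two fixed arms earns $10-2\ell_i(j)$ per round; hence the regret of $\Alg$ relative to $\opt$ is at least $2\,\expect{\mathrm{pReg}(\mathcal{A})}+6\,\expect{S_{2\to1}(\mathcal{A})}$, where $\mathrm{pReg}$ is the bandit pseudo-regret on $\{\ell_i\}$ and $S_{2\to1}$ counts $\Algi2\to\Algi1$ switches. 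Since switch types alternate, $S_{2\to1}\ge (S-1)/2$ for the total switch count $S$, so combining with the previous paragraph (and using that both terms on the right are nonnegative) both $\expect{\mathrm{pReg}(\mathcal{A})}$ and $\expect{S(\mathcal{A})}$ are $O(T^{2/3-\epsilon}+\gamma T)$.

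Finally I would invoke Theorem~\ref{thm:dekel}. When $\gamma$ is a fixed polynomial factor below $T^{-1/3}$ — say $\gamma T\le T^{2/3-\delta'}$ for a constant $\delta'\in(0,2/3)$, which picks out a nonempty sub-interval of $(T^{-1},T^{-1/3}]$; and for $\gamma\le T^{-1}$ the claim already follows from Theorem~\ref{thm:lbbandit} — the bounds above read $\expect{\mathrm{pReg}(\mathcal{A})}=O(T^{2/3-\delta})$ with $\delta=\min(\epsilon,\delta')>0$. Applying Theorem~\ref{thm:dekel} with $\alpha=2/3-\delta$ forces $\mathcal{A}$ to make $\tilde\Omega(T^{2(1-\alpha)})=\tilde\Omega(T^{2/3+2\delta})$ switches in expectation over this loss distribution, which contradicts $\expect{S(\mathcal{A})}=O(T^{2/3-\delta})$ for $T$ large enough. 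This establishes the lower bound for $\gamma$ in the stated range.

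I expect the main obstacle to be twofold. First, one must be careful that the reduction captures \emph{every} non-clairvoyant online algorithm and not only those that literally simulate $\Algi1$ or $\Algi2$: this is precisely the argument already made in Theorem~\ref{thm:lbbandit} (using that deadlines are immediate and that a non-clairvoyant algorithm cannot observe the length of a job it does not run), so I would lift it directly, but it is the part that needs the most care when written out, since the welfare decomposition into ``pseudo-regret plus switch cost'' must hold for an arbitrary policy. Second, the endpoint $\gamma\approx T^{-1/3}$ is delicate: there the restart loss $c\gamma T=\Theta(T^{2/3})$ is of the same order as the target regret, so ``regret vs.\ $\opt$ $\le O(T^{2/3-\epsilon})+c\gamma T$'' no longer pushes $\expect{\mathrm{pReg}(\mathcal{A})}$ strictly below $T^{2/3}$; handling this endpoint requires either tracking the polylogarithmic factors in Theorem~\ref{thm:dekel} against the explicit constant $c$ of Lemma~\ref{lem:restart}, or reading the statement as asserting hardness for \emph{some} $\gamma$ in the interval, which already suffices to match the upper bound of Theorem~\ref{thm:regret-nclr}.
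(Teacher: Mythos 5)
Your proposal follows essentially the same route as the paper's proof: argue by contradiction, use Lemma~\ref{lem:restart} to pass from the random-restart benchmark back to the non-restarting one, reuse the reduction of Theorem~\ref{thm:lbbandit} to the bandit-with-switching-costs problem, and then invoke Theorem~\ref{thm:dekel} to force $\tilde{\Omega}(T^{2/3+2\epsilon})$ switches and hence a contradiction. Your write-up is, if anything, more careful than the paper's: the endpoint issue you flag at $\gamma\approx T^{-1/3}$ (where the benchmark gap $c\gamma T=\Theta(T^{2/3})$ swamps the assumed $O(T^{2/3-\epsilon})$ regret when passing between benchmarks) is real and is silently elided in the paper, which simply asserts that regret $O(T^{2/3-\epsilon})$ against $\xoverline{\opt}$ implies the same bound against $\opt$.
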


\begin{proof}
Proof is by contradiction. Suppose there is an online non-clairvoyant algorithm $\mathcal{A}$ that achieves a regret of $O(T^{2/3 - \epsilon})$ for some $\epsilon > 0$. From Lemma \ref{lem:restart}, this implies that it achieves a regret of at most $O(T^{2/3-\epsilon})$ against the non-restarting benchmark of Theorem \ref{thm:lbbandit}. Our proof of Theorem \ref{thm:lbbandit} gives a reduction from the bandit with switching cost problem to the non-clairvoyant scheduling problem. Therefore, we can invoke  Theorem \ref{thm:dekel}, which implies that $\mathcal{A}$ does at least $\tilde{\Omega}(T^{2(1-(2/3 - \epsilon))})$ switches. Since there are only two actions in our lower bound instance, $\mathcal{A}$  still pays a switching cost of at least $\tilde{\Omega}(T^{2(1-(2/3 - \epsilon))})=\tilde{\Omega}(T^{2/3 + 2\epsilon})$ in switching from $\Algi{2}$ to $\Algi{1}$. This is gives a contradiction to the regret of $\mathcal{A}$ being $O(T^{2/3-\epsilon})$ against the non-restarting benchmark, and completes the proof.
\end{proof}

\bibliographystyle{plainnat}
\bibliography{queueing,auctions}
\newpage
\appendix
\section{Deferred proofs for upper bound constructions}

\label{sec:app-proofs-ub}

\begin{numberedlemma}{\ref{lem:truth-restart}}
	{If $A$ is an order respecting truthful mechanism, then $A$ with (arbitrary) restarts is also order respecting and truthful. }
\end{numberedlemma}
\begin{proof}
	We assume that $A$ is restarted in the interval $[1,\dlmax]$, and that the decision to restart is not dependent on what the jobs report.
	For a job that arrives by time 0, reporting an arrival time after 0 is not beneficial since that would mean that this job can only be processed beginning time $\dlmax+1$, and the job would be past its deadline by then. 
	If the job reports an arrival time before time 1, then truthfulness of $A$ guarantees that no misreport is beneficial. 
	
	Now consider a job $j$ that arrives after time 0. 
	If the deadline of this job is such that it has to start by time $\dlmax$, then no matter what it reports it does not get scheduled. 
	For all other jobs, consider the instance where the jobs that arrive during the interval $[1,\dlmax]$ actually arrive at time $\dlmax+1$, with the same arrival order. 
	Truthfulness of $A$ for this instance guarantees truthfulness for such jobs. 
\end{proof} 

\begin{numberedtheorem} {\ref{thm:truthfulswitching}}
	Given \orderresp truthful mechanisms $A$ and $B$ in the non-clairvoyant setting, 
	switching mechanism $C$ in Definition~\ref{def:nclr-truthful-switcher} is \orderresp, truthful, and obtains welfare at least 
	$$ \textstyle \sum_{t\le \dlmax} W_t(A) + \sum_{t\geq1 + \dlmax} W_t(B) ,$$ 
	given that $A$ and $B$ are restarted at time 1. 
\end{numberedtheorem} 
\begin{proof}
	The mechanism $C$ is \orderresp by definition. 
	Truthfulness of mechanism $C$ follows essentially from the truthfulness of restarts (Lemma~\ref{lem:truth-restart}). 
	For any job that arrives before time 1, reporting an arrival time $\geq 1$ is not beneficial since it would only begin processing after $\dlmax$ by which time its deadline would  have passed. 
	If it reports a time before 1, then the truthfulness of $A$ guarantees that no misreport can be beneficial. 
	For any job that arrives after time 0, the situation is exactly the same as a restart. What mechanism was run before time 1 has no bearing on the allocation and payments of this job.
	Lemma~\ref{lem:truth-restart} guarantees truthfulness for these jobs. 
	
	The welfare guarantee follows from observing that mechanism $C$ completes all jobs that mechanism $A$ started  before time 1, and 
	that after time $\dlmax$, the welfare of $C$ matches that of  $B$ due to Lemma \ref{lem:state-match}. 
	The only loss of welfare is in the interval $[1,\dlmax]$ which is bounded by $\vmax(\dlmax)$. 	
\end{proof} 
\section{Doubling Trick In the Multi-Armed Bandit Problem}
\label{appendix:doubling-trick}
In this section, we briefly explain the \emph{doubling trick}. This helps with the case when algorithm $\switcherbandit_1(R,T)$ (that admits a regret guarantee of $O(R\sqrt{T\log(T)})$)  needs to know $R$ and $T$ in advance, and now we want to design an algorithm $\switcherbandit_2$ that does not need knowing these parameters and still wants to achieve $O(R\sqrt{T\log(T)})$ regret bound (assuming $R$ and $T$ are finite). Doubling trick for time horizon $T$ is fairly standard, e.g. see~\cite{auer2007improved}. We show how doubling trick works for range $R$, and then how to merge the two doubling tricks. 
\paragraph{Doubling trick for range $R$.} For simplicity, suppose $R=2^K$ for some integer $K$. $\switcherbandit_2$ does the following. It starts with a guess (initialized to $1$) for $R$ and simulates $\switcherbandit_1$ with this guess. Every time it sees a reward that is not in the guessed range, it doubles the guess (it may double it many times at the same time instance) and starts from scratch. Suppose $T_1,T_2,\ldots,T_k$ are the length of time intervals between two doubling. Therefore, for some constant $c>0$, 
\[
\reg(\switcherbandit_2)\leq \sum_{j=1}^{K}c.2^j\sqrt{T_j\log(T_j)}\leq \sqrt{T\log(T)}\sum_{j=1}^{K}c.2^j=O(R\sqrt{T\log(T)})
\]
\paragraph{Doubling trick for both range $R$ and time horizon $T$.} In order to do so, use the doubling trick for $R$ as a black-box, and during the steps of doubling trick for $T$, use this black-box.

\section{Deferred proofs for lower bound constructions}

\label{sec:app-proofs}
\subsection{Lower bound of $\Omega(\sqrt{T})$  for the clairvoyant setting}
\begin{figure}[h]
	\centering
	\includegraphics[width=0.8\textwidth]{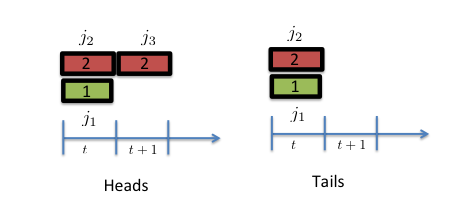}
		\caption{Clairvoyant lower-bound instance}
	\label{fig:lowerboundexperts}
\end{figure}

\begin{numberedtheorem}{\ref{thm:lb-experts}}
There exists an instance of the clairvoyant scheduling problem where the regret of any online algorithm relative to the hindsight optimal posted-pricing-FIFO algorithm is $\Omega(\sqrt T)$.
\end{numberedtheorem}

\begin{proof}
Our proof is an adaptation of the lower bound instance for the problem of prediction with expert advice. 
Recall the lower bound  of $\Omega(\sqrt T)$ for the problem of prediction with expert advice. In this instance, we have two experts. 
In each round, the adversary chooses one of the two experts uniformly at random (and independently of the previous rounds), and assigns a reward of 1. 
The adversary sets the reward of the other expert to zero. 
The expected reward of any online algorithm is $T/2$ but suffers a regret of $\Omega(\sqrt T)$, as the expected reward of the best expert is $\text{\bf E}(\max \{ \# \text{Heads}, \# \text{Tails} \}) =  T/2 + \Omega(\sqrt T)$.
We adapt this lower bound to our scheduling problem as follows.

We map each round of the game to two time steps. We now describe our scheduling instance; we refer the reader to Figure \ref{fig:lowerboundexperts}. Let round $i$ of the game correspond to time steps $[t, t+1]$. In our instance, all jobs have unit processing lengths.
At time $t$, we release two jobs: $j_1$ and $j_2$.  The deadline of job $j_1$ is $t+1$ whereas the deadline of job $j_2$ is $t+2$. Job $j_1$ has a value of 1 and job $j_2$ has a value of 2.  At time $t+1$, we toss an unbiased coin. If the coin lands heads, we release job $j_3$ that has a deadline of $t+2$ and value 2.   We repeat this instance in each round of the game.

Now consider two posted price algorithms $\Algi{1}$ and $\Algi{2}$ with prices 1 and $2$ and FIFO scheduling policy.  It is easy to check that $\Algi{1}$ schedules jobs $j_1$ and $j_2$ in all rounds and obtains a total value of $3T$.   On the other hand,  $\Algi{2}$ processes jobs $j_2$  in all the rounds, and processes jobs $j_3$ if it arrives, depending on the outcome of the coin toss. Therefore, the expected value obtained by $\Algi{2}$ is also $3T$.

Consider the decision of an online algorithm at the beginning  of each round. If it decides to schedule job $j_1$, then the maximum value it can get in a round is equal to that of $\Algi{1}$. On the other hand, if it decides schedule $j_2$, then the maximum value it can get is equal to that of $\Algi{2}$.  Hence, the expected value obtained by any online algorithm is 3T.

Consider the expected value of the better of $\Algi{1}$ and $\Algi{2}$.  Let $\Algi{1}(i)$ and $\Algi{1}(i)$ denote the value obtained by the algorithms in round $i$. Let $X_i$ denote a random variable that takes a value of 1 with probability 1/2 and -1 with probability 1/2.
\begin{eqnarray}
\text{\bf E}( \max \{ \sum_{i} \Algi{1}(i),  \sum_{i} \Algi{1}(2) \} ) &=& \text{\bf E}(\max \{ 3T,  3T + \sum_{i} X_i \}) \nonumber \\
 &=& 3T + \text{\bf E}(\max \{ 0, \sum_{i} X_i \}) \nonumber \\
  &=& 3T + \Omega(\sqrt T) \nonumber
\end{eqnarray}
Therefore, the regret of the online algorithm is at least $\Omega (\sqrt T)$.
\end{proof}

\subsection{ Lower bound for the non-clairvoyant setting with random restarts }

\begin{numberedlemma}{\ref{lem:reward}}
For all rounds $i = 0,1, \ldots T$, the expected value of $\Algi{1}$ in round $i$ is $10 - 2 \ell_i(1)$ and expected value of $\Algi{2}$ in round $i$ is $10 - 2 \ell_i(2)$.
\end{numberedlemma}

\begin{proof}
Consider $\Algi{1}$.  We note that $\Algi{1}$ only schedules jobs from the sets $J_1$ and $J_2$. This is because, whenever $\Algi{1}$ finishes processing a job, there is a job belonging to $J_1$ or $J_2$ that is released exactly at that time and no jobs belonging to $J_3$ and $J_4$ are available for processing. Moreover, $\Algi{1}$ processes jobs from the set $J_1$ in all rounds, and it processes the job from $J_3$ if the processing length of job from $J_1$ is 6. This follows from our construction where the completion time of the job from $J_1$ with processing length 6 coincides exactly with the release time of the job from $J_2$. Therefore, the expected reward obtained by $\Algi{1}$ in round $i$ is $6 + 2 \cdot p_i(1) + (1 - p_i(2)  ) 2 \cdot 3 =  10 - 2 \ell_i(1) $. 

Similarly, it is easy to check that $\Algi{2}$ only schedules jobs from the sets $J_3$ and $J_4$. Therefore, the expected reward obtained by $\Algi{2}$ in round $i$ is $4 + 4 \cdot p_i(2) + (1 - p_i(2)  ) 2 \cdot 3 =  10 - 2 \ell_i(2) $. 
\end{proof}

\begin{numberedlemma}{\ref{lem:switch}}
If an online algorithm switches from $\Algi{2}$ to $\Algi{1}$ in any round $i$, it incurs a loss of at least 6.
\end{numberedlemma}

\begin{proof}
We can assume  that the switching times of the online algorithm correspond to the completion time of some job or idle periods. If the algorithm switches in the middle of processing a job, the  lemma follows trivially since the value of every job in our instance is at least 6.  Fix a round i, and consider a time instant $t$ when the algorithm switches from $\Algi{2}$ to $\Algi{1}$. We consider two cases.

Case 1: Switching time $t$ corresponds to the completion time of a job from the set $J_4$. In this case, 
$\Algi{1}$ is already processing a job from the set $J_1$, and since deadlines of the jobs are same as arrival times, the online algorithm will not be able to schedule the job from the set $J_1$. Therefore, the online algorithm loses a job of value of at least 6 in this round, which we charge to the switching cost.

Case 2: Switching time $t$ corresponds to the completion time of a job from the set $J_3$.  This case has two sub-cases depending on whether the job from the set $J_3$ had a processing length of 2 or 4. Suppose processing length of the job was 2. In this scenario, the online algorithm will not be able to process the job from the set $J_4$, and hence loses a value of 6 in round $i$. On the other hand,  if the processing length of the job from the set $J_3$ was 4, then, the algorithm loses the job $j$ from set $J_1$ in the round $i+1$, since the time of switch $t$ is greater than the  deadline $d_j$ of the job.

In the both cases, the online algorithm loses a value of 6 in round $i$, which we charge to the switching cost.
\end{proof}

\subsection{Linear lower bound for non-clairvoyant setting without random restarts, continuous time}

Here we show that without our assumption regarding random restarting benchmark, no algorithm can get a sub-linear regret for the non-clairvoyant case,  in a continuous time setting. 
By this we mean that the arrival time, deadline  and processing lengths could be real numbers rather than integers. 

\begin{figure}[h]
	\centering
	\includegraphics[width=0.9\textwidth]{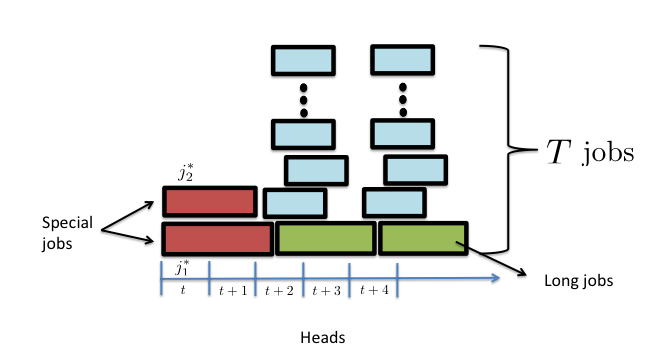}
		\caption{Non-clairvoyant without random restart lower-bound}
	\label{fig:non}
\end{figure}

\begin{theorem}
\label{thm:necrestart}
The minimax regret for the non-clairvoyant scheduling problem in a continuous time setting, compared against a benchmark without  the random restart, is at least  $\Omega(T)$.
\end{theorem}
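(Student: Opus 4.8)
The plan is to establish an $\Omega(T)$ regret lower bound for the non-clairvoyant setting in continuous time by exhibiting a hard distribution over instances on which any online algorithm --- even a randomized one --- suffers linear regret against the (non-restarting) best-in-hindsight posted-price-FIFO benchmark. By Yao's minimax principle, it suffices to fix a distribution over job instances and argue that every \emph{deterministic} online algorithm incurs regret $\Omega(T)$ in expectation. I would use exactly the instance sketched in Figure~\ref{fig:non}: two candidate mechanisms $\Algi{1},\Algi{2}$ (posted prices $1$ and $2$ with FIFO), two ``special'' jobs $j_1^*$ (value $1$) and $j_2^*$ (value $2$) released at the start with $j_1^*$ slightly before $j_2^*$ so that $\Algi{1}$ commits to $j_1^*$ and $\Algi{2}$ to $j_2^*$, and then a coin flip that selects which of two job streams $S_1$ (value-$1$ jobs, synchronized to $\Algi{1}$'s schedule) or $S_2$ (value-$2$ jobs, synchronized to $\Algi{2}$'s schedule) is released over the remaining $T$ rounds. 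Crucially the ``long job'' inside each stream is chosen uniformly at random among $T$ candidates and the processing length of the corresponding special job is correlated with the release time of that long job, so that the matching mechanism stays perfectly in phase and collects the long job every round.

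\textbf{Key steps.} First I would verify the benchmark value: conditioned on heads, $\Algi{1}$ schedules a length-$2$ value-$1$ long job each round for a total of $2T$; conditioned on tails, $\Algi{2}$ schedules a length-$2$ value-$2$ long job each round for a total of $4T$; hence $\text{\bf E}[\xoverline{\opt}] \ge \tfrac12(2T) + \tfrac12(4T) = 3T$. Second, I would analyze a fixed deterministic algorithm $\mathcal{A}$: since the coin is hidden and the two streams look identical until a long job is actually run to completion (non-clairvoyance!), with probability $\tfrac12$ $\mathcal{A}$'s first-round choice is ``wrong'' (it commits to the special job of the mechanism that is \emph{not} synchronized with the realized stream), and to recover it must switch. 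Third --- the heart of the argument --- after switching to the correct mechanism, $\mathcal{A}$ is faced in each round with $T$ jobs, one of which is long, arriving at uniformly random offsets, and it cannot tell them apart until it finishes one; so in expectation it takes $\Theta(T)$ rounds to ``find'' the long job, during which it only collects short jobs, losing a constant fraction of the per-round value. Putting the cases together (switch from $\Algi{2}$ to $\Algi{1}$: value $\le \tfrac{T}{2}\cdot 1 + \tfrac{T}{2}\cdot 2 = 1.5T$; switch from $\Algi{1}$ to $\Algi{2}$: value $\le \tfrac{T}{2}\cdot 2 + \tfrac{T}{2}\cdot 4 = 3T$) gives $\text{\bf E}[\W{}{\mathcal{A}}] \le \tfrac12(1.5T) + \tfrac12(3T) = 2.25T$, so $\reg(\mathcal{A}) \ge 0.75T = \Omega(T)$, and the minimax bound follows.

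\textbf{Main obstacle.} The delicate point is formalizing why the correlation between the special job's processing length and the long job's release time makes the matching mechanism stay synchronized \emph{every} round (so it collects $2$ or $4$ per round rather than being thrown off), while simultaneously ensuring that a switched online algorithm genuinely cannot do better than blind search for the long job: I need the $X_j$ offsets and the choice of long job to be drawn once at the start and be independent of everything the algorithm observes before it completes a long job, and I need to rule out clever adaptive strategies (e.g., deliberately finishing a job to ``probe'' its length) --- but any probe of a short job costs a round at reduced value, so the expected-$\Theta(T)$-rounds-to-discovery bound still holds up to constants. I would also need the continuous-time assumption to make the ``slightly earlier'' tie-breaking and the exact phase-locking clean; in discrete time one would need the $8$-step gadget of Theorem~\ref{thm:lbbandit} instead. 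The rest is the routine case analysis and taking expectations, which I would not grind through here.
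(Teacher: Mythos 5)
Your proposal follows essentially the same route as the paper's Appendix~\ref{sec:app-proofs}: the Yao-minimax reduction to deterministic algorithms, the two phase-locked posted-price FIFO mechanisms with special jobs $j_1^*,j_2^*$, the coin flip selecting one of two streams with a hidden long job among $T$ candidates whose release is synchronized to the special job's (random) completion time, and the observation that non-clairvoyance forces a $\Theta(T)$-round linear search after a wrong initial commitment.

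One small bookkeeping slip in the final step: you average the two ``switch'' cases with weight $\tfrac12$ each to get $\text{\bf E}[W(\mathcal{A})]\le 2.25T$, but a \emph{deterministic} $\mathcal{A}$ commits to one initial mechanism, and the randomness is only in the coin --- with probability $\tfrac12$ the commitment is already correct and $\mathcal{A}$ needs no switch, collecting the full stream value. The correct decomposition (conditioning on the coin rather than on the switch direction) gives that the better deterministic choice, starting with $\Algi{2}$, obtains $\tfrac12\cdot 4T + \tfrac12\cdot 1.5T = 2.75T$, hence regret at least $0.25T$ against your benchmark of $3T$; the $\Omega(T)$ conclusion is unaffected, only the constant changes. (The published appendix has the same structural imprecision in its case accounting and also uses a different value-per-unit for $S_1$ than you do, so your numbers track the earlier variant of the construction; none of this changes the order of the bound.) Your ``main obstacle'' paragraph correctly identifies the independence and probe-cost issues that the paper leaves implicit, which is the right thing to worry about when making the linear-search step rigorous.
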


\begin{proof}
By Yao's minimax principle \cite{Motwani}, we assume that our algorithm $\mathcal{A}$ is deterministic. We now give a distribution over jobs which proves the theorem. 

We map each round of the game into two time steps. Similar to previous lower bound constructions, let $\Algi{1}$ and $\Algi{2}$ denote the two  posted price scheduling mechanisms  with prices 1 and 2 and  FIFO scheduling policy.  All jobs in our lower bound instance have tight deadlines.  See Figure \ref{fig:non} for an illustration of the lower bound instance.

In the first round, the adversary releases two special jobs $j^*_1$ and $j^*_2$.  Job $j^*_1$ has a value 1 and job $j^*_2$ has a value 2; $j^*_1$ arrives slightly earlier than $j^*_2$.  From the definition, $\Algi{1}$ schedules $j^*_1$ and  $\Algi{2}$ schedules $j^*_2$. At the beginning of the game, the adversary tosses an unbiased coin. If the coin comes up heads, then adversary releases a set of jobs $S_1$; otherwise, a set of jobs $S_2$ is released.

The set $S_1$ consists of $T$ jobs, where $T$ is the number of rounds of the game. The value per unit  length of each job is $2$. The release time of a job $j \in S_1$ is $t + X_j$, where $t$ is the beginning of round $i$, and $X_j$ is a uniform $(0,1)$ random variable.  The adversary chooses one of the 
$T$ jobs $j' \in S_1$ uniformly at random, and makes it a {\em long job} by setting the processing length of the job $\ell_{j'} = 2$.  The remaining jobs in $S_1$ have length 1. Furthermore, the adversary correlates the processing length of the special job $j^*_1$ with the release time of the job $j'$, and sets it equal to $2 + X_{j'}$. In other words, the completion time of the special job $j^*_1$ exactly coincides with the release time of $j'$.   The adversary releases $S_1$ at the beginning of each round. Note that the random variables $X_j$ and the long job $j'$ are sampled only in the beginning of game, and do not change over the course of the game.

From the construction it is easy to see that $\Algi{1}$ only schedules long jobs from the set $S_1$, since the completion of the special job $j^*_1$ coincides with the release of the long job, and this repeats in each round till the game ends. The total value obtained by $\Algi{1}$ for this case is $2T$ since it schedules a long job in each round.

The set $S_2$ is constructed exactly same as the set  $S_1$, except for the following differences: 1) The value per unit  length of each job is $2$, and 2) the adversary correlates processing length of the special job $j^*_2$ with release time of the job $j' \in S_2$, and sets it equal to $2 + Y_{j'}$, where $j'$ denotes the long job in set $S_2$. Also notice that random variables $Y_j$ for the set $S_2$ are different from the set $S_1$, but are $(0,1)$ uniform random variables. Again from the construction it is easy to check that $\Algi{2}$ only schedules long jobs, since completion of the job $j^*_2$ coincides with the release of the long job, and this repeats in every round till the game ends. If the coin comes up tails, then the total value obtained by $\Algi{2}$ is $4T$ since it schedules a long job in each round.

Therefore, the expected value of the benchmark is at least $1/2 \cdot 4T+ 1/2 \cdot 4T = 4T$.

Now let us analyze the value obtained by an online algorithm $\mathcal{A}$.  For any realization of the job arrivals,  only one of the two algorithms $\Algi{1}$ or $\Algi{2}$  generate non-trivial value. Since, this is chosen uniformly at random, we conclude that with probability half  $\mathcal{A}$ follows the wrong algorithm in the first round. Therefore, if it needs to achieve a sub-linear regret it has to switch from the algorithm it followed in the first round to the other algorithm. Let us focus on  the case when $\mathcal{A}$ switches from $\Algi{2}$ to $\Algi{1}$ at time $t$.

Now, consider the situation faced by  $\mathcal{A}$ at the beginning of the round $i$ that follows time $t$. It sees $T$ jobs, each with value per unit length of 1, and arriving in the interval $[i, i+1]$ uniformly at random. One of these jobs is a long job, but $\mathcal{A}$  cannot distinguish this since we are in non-clairvoyant setting and the long job is chosen uniformly at random. Therefore, in expectation it takes $T/2$ rounds to identify the long job $\mathcal{A}$. Hence, $\mathcal{A}$ schedules small jobs in $T/2$ rounds. This implies that in expectation $\mathcal{A}$ can only get a value of $T/2 \cdot 1 + T/2 \cdot 4 = 2.5 T$. Similarly, it is easy to argue that if  $\mathcal{A}$ switches from $\Algi{1}$ to $\Algi{2}$ it can get at most  $T/2 \cdot 1 + T/2 \cdot 4 = 2.5T$. Therefore, the expected value obtained  $\mathcal{A}$ on this distribution of jobs is at most $2.5 T$. 

Therefore, it suffers a regret of $ 1.5 T$,  and this completes our proof.

\end{proof}

\end{document}